\documentclass{amsart}
\usepackage{amsmath, color}
\usepackage{amsfonts}
\usepackage{latexsym}
\usepackage{ifpdf}
\usepackage{graphicx,amssymb,lineno}
\usepackage{amssymb}
\usepackage{mathrsfs}
\usepackage{cite}
\usepackage{extarrows}
\usepackage{ulem}
\usepackage{placeins} 
\usepackage{booktabs}
\usepackage{array}
\usepackage{tikz}
\usepackage{tabularx}
\usepackage{booktabs}   
\usepackage{pifont}     
\usepackage{ytableau}
\usepackage{ytableau}
\ytableausetup{
  mathmode,
  boxsize=1.3em,
  smalltableaux
}

\usetikzlibrary{arrows.meta,positioning,calc,decorations.pathreplacing}

\allowdisplaybreaks[4]
\newcommand{\Var}{\text{Var}}

\newtheorem{theorem}{Theorem}[section]
\newtheorem{lemma}[theorem]{Lemma}

\newtheorem{remark}[theorem]{Remark}

\newtheorem{corollary}[theorem]{Corollary}

\theoremstyle{definition}

\newtheorem{example}[theorem]{Example}

\theoremstyle{remark}

\numberwithin{equation}{section} \errorcontextlines=0
\begin{document}

\title[State-independent uncertainty relations]
{Multipartite State-independent Uncertainty Relations}
\author{Yiling Wang}
\address{Department of Mathematics,
   North Carolina State University,
   Raleigh, NC 27695, USA}
\email{ywang327@ncsu.edu}
\author{Kailash Misra}
\address{Department of Mathematics,
   North Carolina State University,
   Raleigh, NC 27695, USA}
\email{misra@ncsu.edu}
\author{Naihuan Jing}
\address{Department of Mathematics,
   North Carolina State University,
   Raleigh, NC 27695, USA}
\email{jing@ncsu.edu}

\keywords{uncertainly relations, state-independent}
\thanks{*Corresponding author: Naihuan Jing}
\keywords{uncertainly relations, Lie algebras, state-independent uncertainty}
\subjclass[2010]{Primary: 81P40; Secondary: 81Qxx}

\begin{abstract}
Uncertainty relations quantify fundamental limits on simultaneous measurement of quantum observables. 
While conventional formulations are state-dependent, state-independent uncertainty relations (SIURs) impose universal bounds determined solely by the algebraic structure of the operators, with applications across metrology, quantum cryptography, and entanglement detection. 
Despite extensive study, exact analytical variance-based SIURs have so far been established primarily for one- and bi-partite systems, {
while entropic SIURs have been extended to multipartite and memory-assisted settings through information-theoretic constructions. In contrast, exact variance-based SIURs beyond the bipartite level have remained analytically unresolved.}
Here we develop a representation-theoretic framework for multipartite SIURs in collective spin-$\tfrac{1}{2}$ systems. 
Using the Clebsch--Gordan decomposition 
and extremal analysis of total spin variance, we derive exact state-independent bounds up to quintipartite systems. 
A clear structural dichotomy emerges: odd $n$ systems exhibit strictly positive universal bounds (e.g., $\Delta^2(\mathfrak{su}_2)\!\ge\!4/11$ for $n=3$), whereas even $n$ admit vanishing variance on trivial sectors but retain positive reduced-space bounds (e.g., $\Delta^2(\mathfrak{su}_2)\!\ge\!1/8$ for $n=4$). 
These results establish the first unified, algebraic framework for multipartite variance-based SIURs in qubit ensembles.
\end{abstract}

\maketitle
\section{Introduction}

Uncertainty relations form a cornerstone of quantum mechanics,  formalizing the fundamental limits on the simultaneous measurement 
of incompatible observables. Since Heisenberg's seminal insight~\cite{heisenberg1927} and the rigorous formulations of Kennard, Robertson, and Schr\"odinger~\cite{kennard1927,robertson1929,schrodinger1930}, such relations have provided indispensable constraints for quantum measurement and dynamics. The canonical variance-based inequality,
\begin{equation}
\Delta(A)\Delta(B)\ge \frac{1}{2}\big|\langle [A,B]\rangle\big|,
\end{equation}
expresses this tradeoff in terms of the variances $\Delta^2(A)=\Var(A)=\langle A^2\rangle-\langle A\rangle^2$ and $\Delta(B)$ evaluated on a quantum state $|v\rangle$. {
However, the lower bound depends explicitly on the state through the commutator expectation value and can vanish whenever $\langle [A,B]\rangle_v=0$ or $|v\rangle$ is an eigenstate of either observable. In \cite{maccone2014, coles2017, schwonnek2017} the sum-form was proposed to rectify this situation, subsequently weighted form \cite{XiaoJing16} of uncertainty was also introduced. 
One notices that a state-dependent uncertainty relation (usually variance-based) provides a lower bound that explicitly depends on the particular quantum state under consideration. 
Such bounds may vanish for special states (e.g., eigenstates or states with vanishing commutator expectation values), and therefore quantify measurement incompatibility relative to a given state. 
In contrast, a state-independent uncertainty relation (SIUR) establishes a universal lower bound determined solely by the system or the representation in which they act, and holds uniformly for all states in the Hilbert space. 
State-independent bounds therefore reveal intrinsic, representation-level limitations on simultaneous measurability, independent of state preparation. 
Such universal bounds are particularly relevant in scenarios where the quantum state is unknown, fluctuating, or adversarially chosen, as in entanglement certification, metrological benchmarking, and noise-resilient quantum protocols.}

To eliminate this dependence, state-independent uncertainty relations (SIURs) have been developed~\cite{busch2014,dammeier2015,deGuise2018}. These bounds are not only algebraic, as they depend only on the structure of the operators, and also reveal information about the systems as it remain valid for all states. Such universal formulations provide intrinsic noise floors and have found wide application across quantum metrology~\cite{Giovannetti2006PRL,giovannetti2011,pezze2018,Hall2004PRA,kitagawa1993}, quantum thermodynamics and quantum control~\cite{Deffner2017JPA,Correa2015PRL}, and entanglement detection~\cite{hofmann2003,guhne2004,GuhneToth2009,toth2007,Guhne2002PRA}. 
{
In particular, uncertainty- and variance-based criteria have been extended to 
multipartite and entanglement-depth detection using collective observables and 
quantum Fisher information~\cite{Hyllus2012,Toth2014,Ren2021PRL,Liu2024PLA}, 
as well as to numerical and semidefinite-programming approaches for genuine 
multipartite entanglement certification~\cite{Jungnitsch2011}. 
Recent reviews further clarify the structural aspects of multipartite 
entanglement~\cite{Horodecki2024Multipartite}. }
A simple example is the qubit sum relation
\begin{equation}
\Delta^2(\sigma_x)+\Delta^2(\sigma_y)+\Delta^2(\sigma_z)\ge 2,
\end{equation}
which holds for all single-qubit states and reflects a 
quantum uncertainty for spin-$1/2$ observables.

Two complementary paradigms dominate uncertainty theory. Variance-based formulations follow the Heisenberg--Robertson--Schr\"odinger type, linking measurement fluctuations to the geometry of Hilbert space~\cite{coles2017,maccone2014}. Entropic formulations, pioneered by Maassen and Uffink~\cite{maassen1988} and developed extensively thereafter~\cite{wehner2010,coles2017}, recast uncertainty in information-theoretic terms, bounding Shannon or R\'enyi entropies of the measurement outcomes. The entropic framework {
usually carries state-independent feature that} underlies cryptographic and quantum-memory applications~\cite{berta2010,Renner2005,Tomamichel2011PRL,Tomamichel2010IEEE,Hayashi2007}, while variance-based relations are geometrically intuitive and directly connected to spin squeezing and metrological precision~\cite{Giovannetti2006PRL,pezze2018,kitagawa1993,Hyllus2012}. Beyond their foundational role, both approaches serve as operational tools for entanglement detection and quantum-state characterization~\cite{hofmann2003,guhne2004,GuhneToth2009,Guhne2002PRA,Ren2021PRL}, and connect to modern notions of decoherence-free encoding and reference-frame-invariant information~\cite{lidar1998,bartlett2007}.

{
Despite their long history, genuinely state-independent bounds beyond small subsystems remain comparatively less developed within the variance-based, algebraic formulation. 
By contrast, the entropic paradigm has witnessed substantial multipartite generalizations and often displays state-independent information. 
Early correlation-based extensions were explored by Renes and Boileau~\cite{Renes2009}, who demonstrated how auxiliary correlations modify entropic uncertainty bounds. 
The introduction of quantum memory by Berta et al. subsequently provided a paradigmatic tripartite framework in which measurement uncertainty becomes state-independent once a correlated memory system is included.
Subsequent developments extended entropic uncertainty relations to more general multipartite and conditional scenarios~\cite{haddadi2021,Zhang2023MultipartiteEUR,Xu2025ECT}, and generalized multipartite entropic uncertainty relations have recently been experimentally investigated~\cite{Wang2024GEUR}
and also in other systems \cite{Wang2024SchwarzschildEUR}. 
These approaches typically rely on information-theoretic quantities such as conditional entropies or memory-assisted constructions.}

{
Beyond the two-particle level, however, no exact state-independent variance bounds were known prior to the present work, as the presence of multiple irreducible components and trivial sectors complicates the analysis~\cite{schwonnek2017}. This imbalance between the two paradigms motivates the present work.}

In this paper we develop a representation-theoretic framework for multipartite state-independent collective spin systems for variance-based uncertainty relations. For the standard $n$-qubit Hilbert space $(\mathbb{C}^2)^{\otimes n}$ carrying the $\mathfrak{su}_2$ representation generated by collective angular momentum operators $J_{x,y,z}$, we consider the total variance
\begin{equation}
\Delta^2(\mathfrak{su}_2)_v
= \sum_{a\in\{x,y,z\}}\!\big(\langle J_a^2\rangle_v-\langle J_a\rangle_v^2\big),
\end{equation}
and will show that it admits a minimal value. Our strategy is to converting the multipartite systems into direct-sum systems by the Clebsch--Gordan decomposition 
and then study the interaction between different sectors. The trivial representation $V^{(0)}$ permits complete suppression of collective variance, while nontrivial components enforce a strictly positive lower bound depending only on $j$.

Through explicit decomposition and extremal analysis, we obtain exact state-independent variance bounds up to $n=5$. For $n=1$, the fundamental bound $\Delta^2(\mathfrak{su}_2)\ge j$ reproduces the single-spin uncertainty. For $n=2$, $(\mathbb{C}^2)^{\otimes 2}\!\cong V^{(1)}\!\oplus V^{(0)}$, the trivial subspace allows $\Delta^2=0$ but the reduced space obeys $\Delta^2\!\ge\!1$. For $n=3$, the decomposition $(\mathbb{C}^2)^{\otimes 3}\!\cong V^{(3/2)}\!\oplus 2V^{(1/2)}$ yields the first fully positive universal bound $\Delta^2\!\ge\!4/11$. Extending to $n=4$, where $(\mathbb{C}^2)^{\otimes 4}\!\cong V^{(2)}\!\oplus 3V^{(1)}\!\oplus 2V^{(0)}$, we derive a reduced-space state-independent lower bound $\Delta^2\!\ge\!1/8$. 

{
To provide a compact roadmap of the results and to emphasize the emerging parity pattern, 
Table~\ref{tab:n-summary} summarizes the Clebsch--Gordan decompositions and the corresponding
(state-independent or reduced-space) variance bounds for $n=1$--$5$.}

\begin{table}[h]
\centering
\renewcommand{\arraystretch}{1.2}
\small
\begin{tabular}{c c c c}
\toprule
$n$-partite & Decomposition & Trivial sector? & State-independent bound \\
\midrule
1 & $V^{(1/2)}$ & No & $\Delta^2 \ge 1/2$ \\

2 & $V^{(1)}\oplus V^{(0)}$ & Yes & $\Delta^2 \ge 1$ on $V^{(1)}$ \\

3 & $V^{(3/2)}\oplus 2V^{(1/2)}$ & No & $\Delta^2 \ge 4/11$ \\

4 & $V^{(2)}\oplus 3V^{(1)}\oplus 2V^{(0)}$ & Yes & $\Delta^2 \ge 1/8$ on $V'$ \\

5 & $V^{(5/2)}\oplus4V^{(3/2)}\oplus5V^{(1/2)}$ & No & strictly positive (num. ind.)\\
\bottomrule
\end{tabular}
\caption{
Clebsch--Gordan decomposition and the corresponding variance-based uncertainty bounds for $n=1\sim 5$ qubit systems
under the collective $\mathfrak{su}_2$ action.
Odd subsystem numbers admit strictly positive universal bounds,
whereas even numbers contain trivial sectors permitting $\Delta^2=0$ on $V^{(0)}$ but retain positive reduced-space bounds.
}
\label{tab:n-summary}
\end{table}

{
As reflected in Table~\ref{tab:n-summary}, this sequence reveals an underlying parity-driven structural principle: even subsystem numbers
admit trivial representations, whereas odd subsystem numbers necessarily exclude
them, thereby enforcing strictly positive variance floors.}
These results bridge the previously separated algebraic and information-theoretic approaches to multipartite variance-based state-independent uncertainty relations, establishing the first explicit quadripartite variance-based state-independent uncertainty relation. 

Our representation-theoretic method further opens a way to study multipartite systems, providing a unified algebraic framework for state-independent uncertainty relations in multipartite qubit ensembles.
{
Thus, throughout this work, the Clebsch–Gordan decomposition 
serves not merely as a structural classification tool, 
but as the representation-theoretic mechanism 
through which state-independent collective uncertainty is determined.}

\section{State-independent Uncertainty Relations in One-partite Systems}

In this section, we study state-independent uncertainty relations (SIURs) for single quantum systems whose observables form a representation of the compact real Lie algebra \( \mathfrak{su}_2 \). This algebra is generated by three Hermitian operators \( J_x, J_y, J_z \) satisfying the standard angular momentum commutation relations:
\begin{equation}\label{e:su2}
[J_x, J_y] = i J_z, \quad [J_y, J_z] = i J_x, \quad [J_z, J_x] = i J_y.
\end{equation}

It is well-known that finite-dimensional 
irreducible unitary representations of $\mathfrak{su}_2$ are indexed by 
spins $\in \{ 0, \frac{1}{2}, 1, \frac{3}{2}, \dots \}$.
The spin $j$ irreducible representation $V^{(j)}$ has dimension \( \dim V^{(j)} = 2j + 1 \). In such representations, the quadratic Casimir operator ({\it which is $1/2$ of the usual Casimir operator for $\mathfrak{sl}_2$})
\begin{equation}\label{e:casimir}
C := J_x^2 + J_y^2 + J_z^2
\end{equation}
acts as scalar multiplication by \( j(j+1) \): for all $|v\rangle \in V^{(j)}$
\begin{equation}
C |v\rangle = j(j+1) |v\rangle  .
\end{equation}

The total variance of the algebra in a normalized quantum state \( |v\rangle \in V^{(j)} \) is defined by
\begin{equation}
  \Delta^2(\mathfrak{su}_2) := \Delta^2(J_x) + \Delta^2(J_y) + \Delta^2(J_z),  
\end{equation}
where each variance is given by
\begin{equation}
 \Delta^2(J_k) = \langle J_k^2 \rangle - \langle J_k \rangle^2, \quad k = x, y, z.   
\end{equation}
Using the Casimir identity, this can be written compactly as
\begin{equation}
    \Delta^2(\mathfrak{su}_2) = \langle C \rangle - \|\langle \vec{J} \rangle\|^2 = j(j+1) - \left( \langle J_x \rangle^2 + \langle J_y \rangle^2 + \langle J_z \rangle^2 \right).
\end{equation}

The total variance is minimized when the state maximizes the expectation of \(\vec{J}\) in some direction. In particular, for the highest weight state \( |j\rangle \), where \( J_z |j\rangle = j |j\rangle \) and \( \langle J_x \rangle = \langle J_y \rangle = 0 \), we obtain:
\begin{equation}
   \Delta^2(\mathfrak{su}_2) = j(j+1) - j^2 =j. 
\end{equation}
This is the minimal value the total variance can attain in \( V^{(j)} \).

We summarize this in the following result \cite{schwonnek2017}:

\begin{theorem}[State-independent Variance Bound]\label{thm:onepartite-su2}
Let \( |v\rangle \in V^{(j)} \) be a normalized state in the spin $j$ irreducible representation of \( \mathfrak{su}_2 \). Then
\begin{equation}\label{e:varbound}
\Delta^2(\mathfrak{su}_2) \geq j,
\end{equation}
with equality if and only if \( |v\rangle \) is an extremal weight vector, i.e., an eigenvector of \( J_z \) with eigenvalue \( \pm j \).
\end{theorem}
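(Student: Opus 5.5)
The plan is to reduce the inequality to a bound on the length of the expectation vector $\vec m := \langle \vec J\rangle_v=(\langle J_x\rangle,\langle J_y\rangle,\langle J_z\rangle)\in\mathbb{R}^3$. By the Casimir identity already recorded above, $\Delta^2(\mathfrak{su}_2)=j(j+1)-\|\vec m\|^2$. So \eqref{e:varbound} is equivalent to $\|\vec m\|\le j$ for every normalized $|v\rangle\in V^{(j)}$. The equality statement is equivalent to identifying exactly when $\|\vec m\|=j$.

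If $\vec m=0$, then $\Delta^2=j(j+1)\ge j$, with equality only when $j=0$, which is a degenerate case handled trivially. Otherwise set $\hat n=\vec m/\|\vec m\|$ and $J_{\hat n}:=n_xJ_x+n_yJ_y+n_zJ_z$. Then, by linearity of expectation,
\begin{equation*}
\|\vec m\| = \hat n\cdot\vec m=\langle v|J_{\hat n}|v\rangle \le \lambda_{\max}(J_{\hat n}).
\end{equation*}
To compute $\lambda_{\max}(J_{\hat n})$ I will use the rotation covariance of the representation. Choose a rotation $R\in SO(3)$ with $R\hat e_z=\hat n$, and lift it to $SU(2)$. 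Exponentiating the representation \eqref{e:su2} gives a unitary $U$ on $V^{(j)}$ with $U J_z U^\dagger=J_{\hat n}$. This is the standard fact that the adjoint action of $SU(2)$ on $\mathrm{span}\{J_x,J_y,J_z\}$ realizes all of $SO(3)$. Consequently $J_{\hat n}$ has the same spectrum as $J_z$, namely $\{-j,-j+1,\dots,j\}$, and each eigenvalue is simple. Hence $\|\vec m\|\le j$, which proves \eqref{e:varbound}.

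For the equality case, $\|\vec m\|=j$ forces $\langle v|J_{\hat n}|v\rangle=\lambda_{\max}$. By the variational (Rayleigh quotient) characterization, $|v\rangle$ then lies in the top eigenspace of $J_{\hat n}$. That eigenspace is one-dimensional and spanned by $U|j\rangle$. Conversely, if $|v\rangle=U|j\rangle$, then $\langle J_{\hat n}\rangle=j$. Moreover $\langle\vec J\rangle=R\,\langle\vec J\rangle_{|j\rangle}=R(0,0,j)^{T}=j\hat n$, since $\langle J_x\rangle=\langle J_y\rangle=0$ on $|j\rangle$ (because $J_\pm$ shift weights). This gives the value $j(j+1)-j^2=j$ computed above.

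The main point requiring care is interpreting the equality clause. Equality holds precisely for the extremal weight vectors with respect to some axis, that is, the spin coherent states $U|j\rangle$. With the quantization axis aligned with $\langle\vec J\rangle$ (so $\hat n=\pm\hat e_z$), these are exactly the eigenvectors of $J_z$ with eigenvalue $\pm j$. I would therefore state the equality case as holding ``up to an $SU(2)$ rotation'', or equivalently ``after choosing the $z$-axis along $\langle\vec J\rangle$''. This reading is forced: the total variance is $SU(2)$-invariant, so any rotated copy of $|j\rangle$ also attains the bound $j$.
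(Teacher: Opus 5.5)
Your proof is correct and follows the paper's route: the Casimir identity reduces the claim to $\|\langle\vec J\rangle_v\|\le j$, with the extremum at a highest-weight vector. The paper only asserts this bound, deferring to Schwonnek et al.\ and de Guise et al.; your conjugation $UJ_zU^\dagger=J_{\hat n}$, together with the simplicity of the top eigenvalue of $J_z$, actually proves both the inequality and the equality case. Your reading of the equality clause as the spin coherent states $U|j\rangle$ is the right one, and it matches the paper's own later wording (``aligned with extremal weights'', ``maximal coherent state''). Taken literally, ``eigenvector of $J_z$ with eigenvalue $\pm j$'' is false: for $j=\tfrac12$ every unit vector gives $\Delta^2(\mathfrak{su}_2)=\tfrac34-\tfrac14=\tfrac12$.
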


Such variance-based state-independent uncertainty relations for \( \mathfrak{su}_2 \) have been rigorously analyzed in \cite{deGuise2018}, where the minimal total variance is shown to be attained by coherent states, i.e., those aligned with extremal weights in the representation. 

In particular, the total variance \( \Delta^2(\mathfrak{su}_2) \) vanishes on the trivial representation \( V^{(0)} = \mathbb{C} \), since the action of \( \mathfrak{su}_2 \) is trivial. 
In general, if the space $V$ contains a trivial subrepresentation $V_0$ of $\mathfrak{su}_2$ (summation of $V^{(0}$), then 
the total variance $\Delta^2(\mathfrak{su}_2)$
is zero at any vector inside the trivial subspace $V_0$, or there are no uncertainty when the whole Lie algebra $\mathfrak{su}_2$ is measured with respect to any vector of the subspace $V_0$. 
It is thus natural to disregard the trivial subspace when we measure the total uncertainty. 

We therefore define the total variance of the Lie algebra $\mathfrak{su}_2$ over the space $V$ as
the total variance over the quotient space $\overline{V}=V/V_0$, where $V_0$ is the maximum subspace annihilated by $\mathfrak{su}_2$:
\begin{equation}\label{e:redvar1}
\Delta^2(\mathfrak{su}_2)_v := \Delta^2(\mathfrak{su}_2)_{v+V_0}
\end{equation}
over the quotient space $\overline{V}$.
Equivalently, this is the total variance computed within the subrepresentation space,
\begin{equation}\label{e:redvar2}
V' := V\ominus V_0,
\end{equation}
which is equivalent to the quotient space \(\overline{V}\) of $V$ mod out all trivial \( \mathfrak{su}_2 \)-submodules. We will call the quotient space $\overline{V}$ or equivalently the subspace $V'=V\ominus V_0$ the {\it reduced space}. Thus the total variance $\Delta^2(\mathfrak{su}_2)$ on $V$ is effectively the {\it reduced total variance} on $\overline{V}$. 

An equivalent formulation of the SIUR can be expressed in terms of first moments. Since the Casimir value is fixed and all generators are Hermitian, we have:
\begin{equation}\label{e:expbound}
\langle J_x \rangle^2 + \langle J_y \rangle^2 + \langle J_z \rangle^2 \leq j^2.
\end{equation}
This inequality is saturated precisely when the state is aligned with a maximal coherent state (highest or lowest weight). The bound corresponds to the squared length of the highest weight \( \lambda = j\alpha \), where \( (\alpha, \alpha) = 2 \), yielding \( (\lambda | \lambda) = 2 j^2 \).

These relations provide a clear state-independent constraint on both variances and expectation values for the single \( \mathfrak{su}_2 \) systems, i.e. one-partite system of irreducible module of $\mathfrak{su}_2 $. Extensions to multipartite systems will be developed in the following sections.

\section{Uncertainty Relations over Direct Sum Spaces}

In quantum systems governed by symmetry principles, composite state spaces may arise not only from tensor products but also from direct sum decompositions. These occur, for example, when a reducible representation is decomposed into irreducible subspaces according to symmetry breaking or spectral splitting. In this section, we focus on such direct sum representations and explore how the structure of uncertainty behaves across orthogonal components of the space.

\subsection{Variance Structure over Direct Sums}

We begin by establishing a general lower bound on the total variance for states residing in direct sum spaces. The following theorem provides a quantitative estimate based on the decomposition of the state vector.\\
\begin{lemma}\label{L:summand}
If for all unit vectors \( \|\bar{v}\| = 1 \), the variance satisfies
\begin{equation}
   \Delta^2(g)_{\bar{v}} := \sum_a\left(\langle \bar{v}| J_a^2 |\bar{v} \rangle - \langle \bar{v}| J_a |\bar{v} \rangle^2 \right) \ge j, 
\end{equation}
then for any $ v \in V$ with, $\|v\| \leq 1$, it holds that
\begin{equation}\label{eq:variance-bound}
    \sum_a\left(\langle v| J_a^2 |v \rangle - \langle v| J_a |v \rangle^2 \right) \ge j\|v\|^2.
\end{equation}
\end{lemma}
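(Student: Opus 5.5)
The argument is a rescaling. The quantity on the left of \eqref{eq:variance-bound} is not homogeneous of degree two in $v$: the second-moment term scales quadratically, while the subtracted square of the first moment scales quartically. So I will normalize $v$, apply the hypothesis to the unit vector, and then compare the two scalings, using $\|v\|\le 1$ to control the quartic term.

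\textbf{Step 1 (trivial case).} If $v=0$, both sides of \eqref{eq:variance-bound} vanish, so assume $v\neq 0$.

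\textbf{Step 2 (rescale).} Put $t=\|v\|\in(0,1]$ and $\bar v=v/t$, so that $\|\bar v\|=1$. For each generator $J_a$,
\[
\langle v|J_a^2|v\rangle=t^2\langle \bar v|J_a^2|\bar v\rangle,\qquad \langle v|J_a|v\rangle^2=t^4\langle \bar v|J_a|\bar v\rangle^2 .
\]
Both expectations are real because the $J_a$ are Hermitian.

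\textbf{Step 3 (regroup).} Summing over $a$, the left side of \eqref{eq:variance-bound} equals
\[
t^2\sum_a\Bigl(\langle \bar v|J_a^2|\bar v\rangle-\langle \bar v|J_a|\bar v\rangle^2\Bigr)+(t^2-t^4)\sum_a\langle \bar v|J_a|\bar v\rangle^2 .
\]
By hypothesis the first sum is at least $j$. In the second term, $t\le 1$ gives $t^2-t^4=t^2(1-t^2)\ge 0$, and the sum of squares of real numbers is nonnegative, so this term is $\ge 0$. Hence the whole expression is at least $jt^2=j\|v\|^2$, which is \eqref{eq:variance-bound}.

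The only delicate point is the sign of the correction term $(t^2-t^4)\sum_a\langle\bar v|J_a|\bar v\rangle^2$. This is exactly where the assumption $\|v\|\le 1$ is used; for $\|v\|>1$ the inequality can fail. Everything else is routine algebra.
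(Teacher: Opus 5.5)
Your proof is correct and is essentially the paper's argument: normalize $v$, use the quadratic scaling of $\langle J_a^2\rangle$ against the quartic scaling of $\langle J_a\rangle^2$, and regroup so that the correction term $\|v\|^2(1-\|v\|^2)\sum_a\langle\bar v|J_a|\bar v\rangle^2$ is nonnegative exactly because $\|v\|\le 1$. Your separate treatment of $v=0$ is a small improvement, since the paper divides by $\|v\|$ without comment.
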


\begin{proof}
Let \( v \in V \) with \( \|v\| \le 1 \), and define the normalized vector \( \bar{v} := v / \|v\| \). Then:
\begin{align*}
\langle v | J_a | v \rangle &= \|v\|^2 \langle \bar{v} | J_a | \bar{v} \rangle, \\
\langle v | J_a^2 | v \rangle &= \|v\|^2 \langle \bar{v} | J_a^2 | \bar{v} \rangle.
\end{align*}
Hence the total variance satisfies:
    \begin{align*}
\Delta^2(g)_v 
&= \sum_a\left(\langle v| J_a^2|v \rangle - \langle v| J_a|v \rangle^2 \right) \\
&= \sum_a\left(\|v\|^2 \langle \bar{v}| J_a^2|\bar{v} \rangle - \|v\|^4 \langle \bar{v}| J_a|\bar{v} \rangle^2 \right) \\
&= \|v\|^2 \left( \sum_a \langle \bar{v}| J_a^2 |\bar{v} \rangle - \langle \bar{v}| J_a |\bar{v} \rangle^2 + (1 - \|v\|^2) \sum_a \langle \bar{v}| J_a |\bar{v} \rangle^2 \right) \\
&= \|v\|^2 \left( \Delta^2(g)_{\bar{v}} + (1 - \|v\|^2) \sum_a \langle \bar{v}| J_a |\bar{v} \rangle^2 \right) \\
&\ge \|v\|^2 \Delta^2(g)_{\bar{v}} \ge j \|v\|^2
\end{align*}
\end{proof}

\begin{remark}
The inequality $\Delta^2(g)_v \ge j \|v\|^2$ relies on the assumption $\|v\| \le 1$.
If $\|v\| > 1$, then the inequality~\eqref{eq:variance-bound} may fail as the proof argument shows.
\end{remark}
\begin{theorem}\label{thm:variance-extremal-bound}
Let \( V = \bigoplus_{k=1}^r V^{(j_k)} \) be a direct sum of finite-dimensional \( \mathfrak{su}_2 \)-modules, where each \( V^{(j_k)} \) is the irreducible spin-\( j_k \) representation. Let \( v = \sum_{k=1}^r v_k \in V \) be a normalized vector with \( \sum_{k=1}^r \|v_k\|^2 = 1 \), and \( v_k \in V^{(j_k)} \) denoting the orthogonal component of \( v \) in the \( k \)-th summand. 
Then the total variance satisfies the following universal lower bound:
\begin{equation}\label{eq:universal-variance-bound}
    \Delta^2(g)_v \ge \sum_{k=1}^r j_k \|v_k\|^2 - 2 \sum_{1 \le k < k' \le r} j_k j_{k'} \|v_k\|^2 \|v_{k'}\|^2.
\end{equation}
and the minimum of the right-hand side under the constraints \( \sum_{k=1}^r \|v_k\|^2 = 1 \) and \( \|v_k\|^2 > 0\) is given by
\begin{equation}\label{eq:f-min}
    f_{\min} = \frac{r}{4r - 4} + \left( \frac{1}{2} + \frac{ \sum_{i=1}^r \frac{1}{j_i} }{4 - 4r} \right) \cdot \frac{ -2(r-1) + \sum_{i=1}^r \frac{1}{j_i} }{ \left( \sum_{i=1}^r \frac{1}{j_i} \right)^2 - (r - 1) \sum_{i=1}^r \frac{1}{j_i^2} }.
\end{equation}

\end{theorem}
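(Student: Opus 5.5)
The plan has two parts: derive the lower bound (\ref{eq:universal-variance-bound}) by splitting the variance along the orthogonal summands, then minimize its right-hand side over the simplex with Lagrange multipliers.

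\emph{Decomposing the variance.} Each $V^{(j_k)}$ is an $\mathfrak{su}_2$-submodule and the summands are mutually orthogonal. Hence $J_a$ and $J_a^2$ map $V^{(j_k)}$ into itself, and all cross matrix elements between different summands vanish. For $v=\sum_k v_k$ this gives
\begin{equation*}
\langle v|J_a^2|v\rangle=\sum_k\langle v_k|J_a^2|v_k\rangle,\qquad
\langle v|J_a|v\rangle=\sum_k\langle v_k|J_a|v_k\rangle=:\sum_k m_{k,a}.
\end{equation*}
Writing $\vec m_k=(m_{k,x},m_{k,y},m_{k,z})$, expanding $|\sum_k\vec m_k|^2$ and regrouping yields the exact identity
\begin{equation*}
\Delta^2(g)_v=\sum_{k=1}^r \Delta^2(g)_{v_k}-2\sum_{k<k'}\vec m_k\cdot\vec m_{k'},
\end{equation*}
where $\Delta^2(g)_{v_k}$ is the unnormalized variance of Lemma~\ref{L:summand}.

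\emph{Bounding the two pieces.} Since $\|v_k\|\le 1$, Theorem~\ref{thm:onepartite-su2} together with Lemma~\ref{L:summand} gives $\Delta^2(g)_{v_k}\ge j_k\|v_k\|^2$. For the cross terms I will use Cauchy--Schwarz in $\mathbb R^3$ together with the first-moment bound (\ref{e:expbound}) rescaled to unnormalized vectors, namely $|\vec m_k|=\|v_k\|^2\,|\langle \vec J\rangle_{\bar v_k}|\le j_k\|v_k\|^2$. This gives $-2\,\vec m_k\cdot\vec m_{k'}\ge -2j_kj_{k'}\|v_k\|^2\|v_{k'}\|^2$, and summing the two estimates yields (\ref{eq:universal-variance-bound}).

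\emph{Minimizing the right-hand side.} Put $x_k=\|v_k\|^2$ and $S=\sum_k j_kx_k$. Then
\begin{equation*}
f(x)=\sum_k j_kx_k-S^2+\sum_k j_k^2x_k^2,
\end{equation*}
to be minimized subject to $\sum_k x_k=1$ and $x_k>0$. The Lagrange conditions are $j_k-2j_kS+2j_k^2x_k=\lambda$ for every $k$, so
\begin{equation*}
x_k=\frac{\lambda}{2j_k^2}+\frac{2S-1}{2j_k}.
\end{equation*}
Substituting into $\sum_k x_k=1$ and into $\sum_k j_kx_k=S$ gives two linear equations in $(\lambda,S)$. Their coefficients involve $\sum_k 1/j_k$, $\sum_k 1/j_k^2$ and $r$, and their determinant is, up to sign and a constant factor, $(\sum_i 1/j_i)^2-(r-1)\sum_i 1/j_i^2$. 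Solving for $\lambda$ and $S$ and substituting back into $f$ should simplify, using the stationarity relations (for example $\sum_k x_k\,\partial_kf=\lambda$), to the closed form (\ref{eq:f-min}).

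\emph{Main obstacle.} The hard step is justifying that this interior critical point is actually the minimum. The quadratic part $-S^2+\sum_k j_k^2x_k^2$ of $f$ is indefinite in general, so the critical point may be a saddle, or it may lie outside the open simplex. I would first check the bordered Hessian on the tangent space $\{\sum_k y_k=0\}$, where the Hessian quadratic form is $2\sum_k j_k^2y_k^2-2(\sum_k j_ky_k)^2$. I would then state explicitly the condition on $(j_k)$ under which this form is positive definite and the solution satisfies $x_k>0$. If that condition fails, the minimum sits on a face of the simplex, and I would reduce to the same formula with fewer summands; this is presumably why the statement restricts to $\|v_k\|^2>0$.
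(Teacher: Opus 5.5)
Your proof of \eqref{eq:universal-variance-bound} is exactly the paper's, and so is your Lagrange computation. Inserting your expression for $x_k$ into $\sum_k x_k=1$ and into $\sum_k j_kx_k=S$ is what Lemma~\ref{l:min} does when it divides the stationarity equations by $j_i^2$, resp.\ $j_i$, and sums; that lemma also uses $f=\tfrac12(\lambda+S)$. Write $A=\sum_i 1/j_i$ and $B=\sum_i 1/j_i^2$. Your system then gives $\lambda=\bigl(A-2(r-1)\bigr)/\bigl(A^2-(r-1)B\bigr)$, hence the $-2(r-1)$ of the statement. The $-2(r+1)$ printed in the paper's proof and in Lemma~\ref{l:min} is a misprint: \eqref{eq:lambda-closedform} and the values $\tfrac{11}{24}$, $\tfrac{4}{11}$, $\tfrac{1}{8}$ all correspond to $-2(r-1)$.

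The step you call the main obstacle is a real gap, and the paper has it too: Lemma~\ref{l:min} simply declares the critical point to be the minimum. It cannot be closed unconditionally, because the second assertion of the theorem is false as stated.

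To finish, put $z_k=j_ky_k$. Your tangent Hessian becomes $2\bigl(\|z\|^2-(\sum_k z_k)^2\bigr)$ on $\{z:\sum_k z_k/j_k=0\}$. On that hyperplane the maximum of $(\sum_k z_k)^2/\|z\|^2$ is $r-A^2/B$, the squared length of the projection of $(1,\dots,1)$. Hence the form is positive definite iff $A^2>(r-1)B$, which is the same quantity as your determinant. Under this condition $f$ is strictly convex on $\{\sum_k x_k=1\}$, so the formula bounds $f$ from below on the whole hyperplane. It is the minimum over the open simplex exactly when every $x_k^*=\lambda/(2j_k^2)+(2S-1)/(2j_k)$ is positive.

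Without these hypotheses the formula need not even be a lower bound on the simplex. Take the paper's own full quintipartite decomposition: spin $\tfrac{5}{2}$ once, $\tfrac{3}{2}$ four times, $\tfrac{1}{2}$ five times. Then $A^2-9B=-6008/225<0$, and the critical point has coordinates $\approx-0.017$ and $\approx-0.011$ on the $j=\tfrac{5}{2},\tfrac{3}{2}$ sectors. The formula returns $0.3031$, while $f\approx-0.6075$ at the boundary point the paper itself reports.

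In the cases the paper actually uses, both hypotheses hold. For $(\tfrac{3}{2},\tfrac{1}{2},\tfrac{1}{2})$ one gets $A^2-2B=\tfrac{44}{9}$ and $x^*=(\tfrac{1}{11},\tfrac{5}{11},\tfrac{5}{11})$. For $(2,1,1,1)$ one gets $A^2-3B=\tfrac{5}{2}$ and $x^*=(\tfrac{1}{4},\tfrac{1}{4},\tfrac{1}{4},\tfrac{1}{4})$. With $r\ge 2$, $A^2>(r-1)B$ and $x_k^*>0$ added as hypotheses, your plan is a complete proof, and a more careful one than the paper's.

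Separately, Lemma~\ref{L:summand} discards the term $\|v_k\|^2(1-\|v_k\|^2)\,|\langle\vec J\rangle_{\bar v_k}|^2$, and that term is what actually controls the cross terms. Keeping it, Jensen's inequality for $|\cdot|^2$ on $\mathbb{R}^3$ gives $\Delta^2(g)_v\ge\sum_k j_k\|v_k\|^2\ge\min_k j_k$. This is attained by a highest-weight vector in a summand of minimal spin. So the relaxed bound \eqref{eq:universal-variance-bound} and its minimization are not sharp; for three qubits the sharp bound is $\tfrac{1}{2}$ rather than $\tfrac{4}{11}$.
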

\begin{proof}
The total variance of \( v \in V \) is defined as
\[
\Delta^2(g)_v = \sum_{a=1}^3 \left( \langle J_a^2 \rangle_v - \langle J_a \rangle_v^2 \right),
\]
where \( \{J_1, J_2, J_3\} \) is a basis of \( \mathfrak{su}_2 \). Since the \( J_a \) preserve the decomposition \( V = \bigoplus_k V^{(j_k)} \), for any normalized vector $v=\sum_{k=1}^k v_k$, $v_k\in V^{(j_k)}$ one has that:
\begin{equation}
\langle J_a^2 \rangle_v = \sum_{k=1}^r \langle J_a^2 \rangle_{v_k}, \quad
\langle J_a \rangle_v = \sum_{k=1}^r \langle J_a \rangle_{v_k}.
\end{equation}
Expanding the variance:
\begin{align*}
\Delta^2(g)_v
&= \sum_{a=1}^3 \left( \sum_k \langle J_a^2 \rangle_{v_k} - \left( \sum_k \langle J_a \rangle_{v_k} \right)^2 \right) \\
&= \sum_{k=1}^r \sum_{a=1}^3 \left( \langle J_a^2 \rangle_{v_k} - \langle J_a \rangle_{v_k}^2 \right)
 - 2 \sum_{1 \le k < k' \le r} \sum_{a=1}^3 \langle J_a \rangle_{v_k} \langle J_a \rangle_{v_{k'}} \\
&= \sum_{k=1}^r \Delta^2(g)_{v_k} - 2 \sum_{1 \le k < k' \le r} \sum_{a=1}^3 \langle J_a \rangle_{v_k} \langle J_a \rangle_{v_{k'}}.
\end{align*}
Note that $\|v_k\|\leq 1$, it follows from Theorem \ref{thm:onepartite-su2} and Lemma \ref{L:summand} that
\begin{equation}
    \sum_{k=1}^r \Delta^2(g)_{v_k} \ge \sum_{k=1}^r j_k \|v_k\|^2.
\end{equation}
To bound the cross terms, apply the Cauchy–Schwarz inequality:
\begin{equation}
   \sum_{a=1}^3 \langle J_a \rangle_{v_k} \langle J_a \rangle_{v_{k'}} 
\le \left( \sum_{a=1}^3 \langle J_a \rangle_{v_k}^2 \right)^{1/2} \left( \sum_{a=1}^3 \langle J_a \rangle_{v_{k'}}^2 \right)^{1/2}. 
\end{equation}
Moreover, for each \( k \), we use the inequality:
\begin{equation}
  \sum_{a=1}^3 \langle J_a \rangle_{v_k}^2 \le j_k^2 \|v_k\|^4,  
\end{equation}
which follows from the general bound \( \|\langle \vec{J} \rangle_{v_k}\| \le j_k \|v_k\|^2 \) (cf. \eqref{e:expbound}). Hence:
\begin{equation}
  \sum_{a=1}^3 \langle J_a \rangle_{v_k} \langle J_a \rangle_{v_{k'}}
\le j_k j_{k'} \|v_k\|^2 \|v_{k'}\|^2.  
\end{equation}
Putting all together:
\begin{equation}
  \Delta^2(g)_v \ge \sum_{k=1}^r j_k \|v_k\|^2 - 2 \sum_{1 \le k < k' \le r} j_k j_{k'} \|v_k\|^2 \|v_{k'}\|^2.  
\end{equation}
{
The second term
\begin{equation}
   2 \sum_{1 \le k < k' \le r} j_k j_{k'} \|v_k\|^2 \|v_{k'}\|^2 
\end{equation}
arises from the expansion of the squared collective expectation value 
$\|\langle \vec{J} \rangle_v\|^2 = \|\sum_k \langle \vec{J} \rangle_{v_k}\|^2$. 
It represents the interference between different irreducible components in the direct-sum decomposition. 
While each sector individually enforces a positive variance floor proportional to $j_k$, 
the cross terms may reduce the total variance when expectation values from different sectors align coherently. 
This mechanism is responsible for the nontrivial structure of the multipartite lower bounds.}

Letting \( x_k = \|v_k\|^2 >0\), which satisfies the normalization constraint \( \sum_{k=1}^r x_k = 1 \), so we need to minimize the auxiliary function:
\begin{equation}
   f(x) = \sum_{k=1}^r j_k x_k - 2 \sum_{1 \le i < j \le r} j_i j_j x_i x_j , 
\end{equation}
under the constraint $\sum_k x_k=1$,
where \( j_k\) are the spins. 

By Lemma \ref{l:min} in the Appendix, the minimal value of the objective function is given by
\begin{equation}
  f_{\min} = \frac{r}{4r - 4} + \left( \frac{1}{2} + \frac{ \sum_{i=1}^r \frac{1}{j_i} }{4 - 4r} \right) \cdot \frac{ -2(r+1) + \sum_{i=1}^r \frac{1}{j_i} }{ \left( \sum_{i=1}^r \frac{1}{j_i} \right)^2 - (r - 1) \sum_{i=1}^r \frac{1}{j_i^2} },  
\end{equation}
which proves the result.
\end{proof}
\begin{corollary}[Global Minimum on the Boundary of the Simplex]\label{cor:boundary-global-min}
Let \( V = \bigoplus_{k=1}^r V^{(j_k)} \), \( v = \sum_{k=1}^r v_k \in V \), and \( x_k = \|v_k\|^2 \) be as in Theorem~\ref{thm:variance-extremal-bound}, with \( x_k \ge 0 \) and \( \sum_{k=1}^r x_k = 1 \). Then the global minimum of the lower bound function
\[
f(x) = \sum_{k=1}^r j_k x_k - 2 \sum_{1 \le i < j \le r} j_i j_j x_i x_j ,
\]
over the closed simplex
\[
\mathcal{S}_r := \left\{ x \in \mathbb{R}^r \,\middle|\, x_k \ge 0,\ \sum_{k=1}^r x_k = 1 \right\}
\]
is achieved by minimizing over all nonempty support subsets of \( \mathcal{S}_r \), and is given by
\begin{equation}\label{eq:f-global-min}
f_{\min}^{\mathrm{global}} = \min_{\emptyset \ne S \subseteq \{1, \dots, r\}} f_{\min}^{(S)},
\end{equation}
where \( f_{\min}^{(S)} \) denotes the value given by
\eqref{eq:f-min} with the spins $j_k, k\in S$, which is also the local minimum value under the constraint that \( x_k = 0 \) for all \( k \notin S \) and \( \sum_{k \in S} x_k = 1 \).

More precisely, using only \( \{j_i\}_{i \in S} \), we have
\begin{equation}\label{eq:f-min-S}
    f_{\min}^{(S)} = \frac{|S|}{4|S| - 4} + \left( \frac{1}{2} + \frac{ \sum_{i \in S} \frac{1}{j_i} }{4 - 4|S|} \right) \cdot \frac{ -2(|S|+1) + \sum_{i \in S} \frac{1}{j_i} }{ \left( \sum_{i \in S} \frac{1}{j_i} \right)^2 - (|S| - 1) \sum_{i \in S} \frac{1}{j_i^2} },
\end{equation}
where $|S|$ is the cardinality of $S$.
\end{corollary}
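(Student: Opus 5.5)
The plan is the standard face decomposition of the closed simplex. Write $\mathcal{S}_r$ as the disjoint union of the relative interiors of its faces,
\[
\mathcal{S}_r=\bigsqcup_{\emptyset\neq S\subseteq\{1,\dots,r\}} \mathcal{S}^{\circ}_S,\qquad \mathcal{S}^{\circ}_S:=\{x\in\mathcal{S}_r : x_k>0 \iff k\in S\}.
\]
Since $f$ is a polynomial, it is continuous, and $\mathcal{S}_r$ is compact, so the global minimum is attained at some $x^\ast\in\mathcal{S}_r$. Let $S=\operatorname{supp}(x^\ast)$.

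Restricted to the face $\{x_k=0,\ k\notin S\}$, the function $f$ is exactly the auxiliary function of Theorem~\ref{thm:variance-extremal-bound} built from the spins $\{j_i\}_{i\in S}$. The cross terms involving an index outside $S$ vanish because they carry a factor $x_k=0$, and so do the linear terms with $k\notin S$. Moreover $x^\ast$ lies in the open region $\{x_k>0,\ k\in S,\ \sum_{k\in S}x_k=1\}$ and minimizes $f$ there. It is therefore an interior constrained critical point, which is precisely the situation treated by Lemma~\ref{l:min} and Theorem~\ref{thm:variance-extremal-bound} with $r$ replaced by $|S|$. Hence $f(x^\ast)=f^{(S)}_{\min}$, the value in \eqref{eq:f-min-S}, and so $f^{\mathrm{global}}_{\min}\ge\min_S f^{(S)}_{\min}$.

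For the reverse inequality, each $f^{(S)}_{\min}$ is a value $f(x)$ at a point of $\mathcal{S}^{\circ}_S\subseteq\mathcal{S}_r$, because it is attained at the Lagrange point, which has positive coordinates. Therefore $f^{(S)}_{\min}\ge f^{\mathrm{global}}_{\min}$ for every $S$, and equality \eqref{eq:f-global-min} follows.

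Two degenerate cases need separate handling. If $|S|=1$, the formula has $4|S|-4=0$ in a denominator, so I would set $f^{(\{k\})}_{\min}:=j_k$ directly, since the face is the single vertex $e_k$. If for some $S$ the stationary point of Lemma~\ref{l:min} falls outside the open face, or the denominator $(\sum 1/j_i)^2-(|S|-1)\sum 1/j_i^2$ vanishes, then that face has no interior minimizer. In that case I would simply omit $S$ from the minimum, or equivalently set $f^{(S)}_{\min}=+\infty$; the argument above is unaffected, because the true minimizer's support always yields an admissible interior critical point.

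The main obstacle is making precise that "local minimum on the open face" coincides with the closed-form value \eqref{eq:f-min-S}. The quadratic form $-2\sum_{i<j}j_ij_jx_ix_j$ is indefinite on the hyperplane $\sum x_k=1$, so the constrained critical point on a face may be a saddle rather than a minimum. The argument should therefore only assert that the global minimizer's support face contains a critical point whose value is given by \eqref{eq:f-min-S}. Faces whose critical point is not a minimum give values at least as large as the true minimum and so do no harm in the minimum over $S$. I would also check uniqueness of that critical point on the face, using the linear Lagrange system $j_k-2j_k\sum_{i\ne k}j_ix_i=\lambda$, so that the value is well defined.
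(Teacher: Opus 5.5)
Your route is the paper's route. On the face $\{x_k=0,\ k\notin S\}$ the function $f$ is the same quadratic built from $\{j_i\}_{i\in S}$, so the interior Lagrange computation applies with $r$ replaced by $|S|$. Your version is considerably more careful than the paper's two-line proof, which does not address existence of a minimizer, the reverse inequality, the vertices $|S|=1$, or faces whose Lagrange point is not in the open face.

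Your exclusions are genuinely needed. When the Lagrange point lies outside the face, its value need not be attained on $\mathcal{S}_r$ and can undercut the true minimum. For the Lemma's general parameters $(j_1,j_2)=(1,\tfrac1{10})$, the critical value is $-0.5125$, while $\min_{[0,1]}f=f(0)=0.1$. The paper's own quintipartite ``interior solution'' $\approx 0.3031$ is of this type. Its $j=\tfrac52$ and $j=\tfrac32$ coordinates are negative (about $-0.017$ and $-0.011$), and it is a saddle.

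Two points still need repair. First, do not take the closed form on trust from Lemma~\ref{l:min}. \eqref{eq:f-min-S}, like the Lemma, has $-2(|S|+1)$ in the numerator, whereas \eqref{eq:f-min} has $-2(r-1)$, and only the latter is correct. Write $H_1=\sum_{i\in S}1/j_i$, $H_2=\sum_{i\in S}1/j_i^2$ and $s=\sum_{i\in S}j_ix_i$. Solve each stationarity equation for $x_i$ and substitute into $s=\sum j_ix_i$ and $\sum x_i=1$. This gives the linear system $H_1\lambda+2(|S|-1)s=|S|$, $H_2\lambda+2H_1s=H_1+2$, with determinant $2D_S$, where $D_S:=H_1^2-(|S|-1)H_2$. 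Hence $\lambda=\bigl(H_1-2(|S|-1)\bigr)/D_S$ and $f=(\lambda+s)/2$. This reproduces $\tfrac{11}{24}$, $\tfrac{4}{11}$ and $\tfrac18$, whereas the printed constant gives, e.g., $\tfrac{19}{44}$ for $(\tfrac32,\tfrac12,\tfrac12)$. So your identification $f(x^\ast)=f^{(S)}_{\min}$ is valid only with $-2(|S|-1)$.

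Second, your assertion that a face with $D_S=0$ has no interior minimizer is unjustified. If the singular system is consistent, there is a line of critical points, and the closed form gives no value. Fix this by choosing $x^\ast$ to be a minimizer of \emph{minimal} support.

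On the tangent space $\{y:\sum_{i\in S}y_i=0\}$, the Hessian of $f$ is $2\bigl(\sum_{i\in S}j_i^2y_i^2-(\sum_{i\in S}j_iy_i)^2\bigr)$. Substitute $z_i=j_iy_i$ and maximize $\langle z,\mathbf{1}\rangle^2/\|z\|^2$ over $z\perp(1/j_i)_i$. The maximum is $|S|-H_1^2/H_2$, so the Hessian is positive definite iff $D_S>0$ and indefinite iff $D_S<0$.

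At a minimal-support minimizer the Hessian cannot be indefinite, since $x^\ast$ is a local minimum on the open face. A null direction $y$ would keep $f(x^\ast+ty)$ constant, because the gradient term is $\lambda\sum y_i=0$. Sliding along it until a coordinate vanishes would give a minimizer of smaller support. Hence $D_S>0$, the critical point is unique, and your argument closes.

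This also turns your hedged remark about saddles into an exact criterion. The formula value for $S$ is the minimum of $f$ over that face precisely when $D_S>0$ and the Lagrange point has positive coordinates. These are the $S$ that should enter \eqref{eq:f-global-min}, together with the vertices, whose value is $j_k$.
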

\begin{proof}
Theorem~\ref{thm:variance-extremal-bound} computes the minimum of \( f(x) \) under the assumption that all components \( x_k > 0 \), corresponding to interior points of the simplex. Observe that the boundary condition $x_i=0$ corresponds exactly letting the spin $j_i=0$. So the argument of Theorem~\ref{thm:variance-extremal-bound} also extends to the boundary cases where some \( x_k = 0 \), enabling a global minimum search over the full simplex \( \mathcal{S}_r \). For any such subset \( S \), the corresponding spin labels \( j_k \) with \( k \notin S \) have no influence on \( f(x) \) and can be excluded from the evaluation of the minimum.
\end{proof}

\begin{example}[Direct Sum $V^{(\tfrac{3}{2})} \oplus V^{(\tfrac{1}{2})}$]
We consider the direct sum of two irreducible \( \mathfrak{su}_2 \)-modules: \( V = V^{\left(\tfrac{3}{2}\right) }\oplus V^{\left(\tfrac{1}{2}\right)} \). For any normalized state \( v = v_1 + v_2 \in V \), where \( v_1 \in V^{\left(\tfrac{3}{2}\right) }\), \( v_2 \in V^{\left(\tfrac{1}{2}\right)} \), and \( \|v_1\|^2 + \|v_2\|^2 = 1 \), the total angular momentum variance satisfies
\[
\Delta^2(\mathfrak{su}_2)_v \ge f(x) = \frac{3}{2}x + \frac{1}{2}(1-x)  - 2\cdot\frac{3}{2}\cdot\frac{1}{2}x(1-x) \quad\text{where } x = \|v_1\|^2.
\]
Applying Theorem~\ref{thm:variance-extremal-bound}, the minimum of \( f(x) \) over \( x \in [0,1] \) is given by
\[
f_{\min} = \frac{1}{2} + \left( \frac{1}{2} \cdot \frac{3}{2} \cdot \frac{1}{2} - \frac{1}{4}\cdot (\frac{3}{2} + \frac{1}{2}) \right) \cdot \left( -1 + \frac{1}{2} \left( \frac{2}{3} + 2 \right) \right) = \frac{11}{24}
\]
Hence, we obtain the sharp bound:
\[
\Delta^2(\mathfrak{su}_2)_v \ge  \frac{11}{24} \approx 0.4583 \quad \text{for all normalized } v \in V.
\]
\end{example}

This example demonstrates the concrete strength of our lower bound in reducible representations of \( \mathfrak{su}_2 \), where the interference between components significantly affects the minimal uncertainty.
     
\section{Bipartite Uncertainty Relations}
{
The relevance of the Clebsch–Gordan decomposition 
stems from the fact that the collective $\mathfrak{su}_2$ generators 
act block-diagonally on irreducible subrepresentations. 
Since the total variance is defined entirely in terms of these generators, 
its minimal achievable value is determined by how a quantum state 
is distributed across irreducible spin sectors. 
In particular, the presence of a trivial representation $V^{(0)}$ 
permits complete suppression of collective fluctuations, 
whereas nontrivial irreducible components enforce strictly positive 
representation-level variance floors.}


Throughout this section, we let \( \mathfrak{g} = \mathfrak{su}_2 \), the compact real form of \( \mathfrak{sl}_2(\mathbb{C}) \). All representations are finite-dimensional, unitary, and the total variance is computed with respect to an orthonormal basis \( \{J_x, J_y, J_z\} \) of \( \mathfrak{g} \) under the standard inner product.

\begin{theorem}[Relative Positivity of Bipartite Variance]\label{thm:bipartite}
Let \( V = V^{\left(\tfrac{1}{2}\right)} \otimes V^{\left(\tfrac{1}{2}\right)} \cong V^{\left(1\right)} \oplus V^{\left(0\right)} \) be the bipartite representation of \( \mathfrak{su}_2 \). For any normalized state \( v\) in the reduced space $\overline{V}\simeq  V^{(1)}$, the (reduced) total variance satisfies
\begin{equation}
  \Delta^2(\mathfrak{su}_2)_v \ge 1.  
\end{equation}
\end{theorem}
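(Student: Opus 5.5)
The plan is to reduce the claim to the one-partite bound, Theorem~\ref{thm:onepartite-su2}, applied to the spin-$1$ summand.

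First, I will set up the Clebsch--Gordan decomposition concretely. The collective generators are $J_a=\tfrac12(\sigma_a\otimes I+I\otimes\sigma_a)$. The triplet $V^{(1)}$ is spanned by $|00\rangle$, $\tfrac{1}{\sqrt2}(|01\rangle+|10\rangle)$ and $|11\rangle$. The singlet $V_0=V^{(0)}$ is spanned by $\tfrac{1}{\sqrt2}(|01\rangle-|10\rangle)$, and it is the maximal subspace annihilated by all $J_a$. Since $V^{(1)}$ contains no trivial component, $V_0$ is exactly the singlet line. The reduced space is then $V'=V\ominus V_0=V^{(1)}$, which is consistent with the identification $\overline V\simeq V^{(1)}$ in the statement.

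Next, I will check that the variance on $\overline V$ is well defined. Each $J_a$ preserves both $V^{(1)}$ and $V_0$ and vanishes on $V_0$. For $v=v_1+v_0$ with $v_1\in V^{(1)}$ and $v_0\in V_0$, this gives $J_a v=J_a v_1$. So the total variance on the coset $v+V_0$, as defined in \eqref{e:redvar1}--\eqref{e:redvar2}, is computed from the representative $v_1\in V'$. I regard this step as purely definitional: the only point is that the operators descend to the quotient and act there as the spin-$1$ representation.

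Finally, for a normalized $v\in V^{(1)}$, the Casimir $C$ acts as $j(j+1)=2$. Then
\[
\Delta^2(\mathfrak{su}_2)_v=2-\|\langle\vec J\rangle_v\|^2 .
\]
By \eqref{e:expbound} with $j=1$, we have $\|\langle\vec J\rangle_v\|^2\le 1$, so $\Delta^2\ge 1$. This is exactly Theorem~\ref{thm:onepartite-su2} for $j=1$. Equality holds at $|00\rangle$ and $|11\rangle$, or any rotation of these, so the bound is sharp.

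The only substantive ingredient is the inequality $\|\langle\vec J\rangle_v\|\le j$ on $V^{(j)}$. If one prefers not to cite it, it follows by rotating $\langle\vec J\rangle_v$ to the $z$-axis, using unitarity of the $SU(2)$ action, and then using $\langle J_z\rangle\le j$.
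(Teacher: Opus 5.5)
Your proposal is correct and follows the paper's own route: you identify the reduced space with $V^{(1)}$ via the Clebsch--Gordan decomposition and then apply Theorem~\ref{thm:onepartite-su2} with $j=1$. The extra material — the explicit triplet and singlet bases, the check that the $J_a$ descend to $\overline{V}$, and the sharpness at $|00\rangle$ and $|11\rangle$ — adds detail the paper leaves implicit but does not change the argument.
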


\begin{proof}
From the Clebsch--Gordan decomposition, we have:
\begin{equation}
  V^{\left(\tfrac{1}{2}\right)} \otimes V^{\left(\tfrac{1}{2}\right)} \cong V^{\left(1\right)} \oplus V^{\left(0\right)} .  
\end{equation}
Equivalently, in terms of Young diagrams,
\begin{equation}
  \ytableaushort{\ }
\mathbin{\otimes}
\ytableaushort{\ }
\cong 
\ytableaushort{\ \ }
\oplus
\ytableaushort{\ ,\ }
\cong
\ytableaushort{\ \ }
\oplus
\bullet  
\end{equation}
where 
\(
\ytableaushort{\ \ }
\)
corresponds to $V^{(1)}$ and $\bullet$
corresponds to the trivial representation $V^{(0)}$.

The variance vanishes exactly when \( v\in V^{(0)}\), and the reduced variance is bounded below by $1$.
The result immediately follows from Theorem
\ref{thm:onepartite-su2}.
\end{proof}

\begin{figure}[h]
\centering
\begin{tikzpicture}[>=Latex,line cap=round,line join=round,
  every node/.style={font=\small}]
  \definecolor{oiOrange}{RGB}{230,159,0}
  \definecolor{oiGreen}{RGB}{0,158,115}
  \definecolor{oiGrey}{RGB}{120,120,120}

  \coordinate (C) at (0,0);
  \def\R{2.9}   
  \def\r{0.25}  

  \fill[oiOrange,opacity=0.16] (C) circle (\R);

  \fill[white] (C) circle (\r);              
  \fill[oiGrey,opacity=0.10] (C) circle (\r);
  \draw[oiOrange,very thick] (C) circle (\r);
  \node[oiGrey] at (0,0) {$V^{(0)}$};


\node[oiGrey,font=\bfseries] at (0,2.0) {$V^{\otimes 2}$};

  \node[oiOrange] at (0,-\r-1.2) {$V^{(1)}:\ \Delta^2\ \ge\ 1$};

  \node at ({(\r+0.78)*cos(30)},  {(\r+0.78)*sin(30)})  {$|\Phi^+\rangle$};
  \node at ({(\r+0.78)*cos(150)}, {(\r+0.78)*sin(150)}) {$|\Phi^-\rangle$};
  \node at ({(\r+0.78)*cos(270)}, {(\r+0.78)*sin(270)}) {$|\psi^+\rangle$};

  \draw[oiGrey] (3.9,-0.95) -- (3.9, 0.95);
  \fill[oiGrey,opacity=0.10] (3.9,-0.85) circle (6pt);
  \node[anchor=west] at (4.15,-0.85) {$\Delta^2=0$ possible in $V^{(0)}$};
  \draw[oiOrange,very thick] (3.65,0.15) -- (4.15,0.15);
  \node[anchor=west] at (4.15,0.15) {$\Delta^2=1$ (sector $V^{(1)}$)};
  \node[oiGrey,align=left,anchor=west] at (3.2,0.95)
    {\textit{radius encodes} $\Delta^2$};

\end{tikzpicture}

\caption{
Bipartite structure and the corresponding state-independent variance bound. 
The radial coordinate encodes the total collective-spin variance $\Delta^2$ for $(\tfrac{1}{2})\!\otimes\!(\tfrac{1}{2}) = 1 \oplus 0$. 
The central subspace $V^{(0)}$ (grey) is fixed at $\Delta^2=0$, while all states in $V^{(1)}$ (orange region) satisfy $\Delta^2\!\ge\!1$. 
Representative Bell states $|\Phi^\pm\rangle$ and $|\psi^+\rangle$ are indicated on the ring. 
Radius encodes the magnitude of total variance.
}

\end{figure}
The bipartite Hilbert space $(\mathbb{C}^2)^{\otimes 2}\cong V^{(\tfrac12)}\!\otimes\! V^{(\tfrac12)}$ carries the diagonal $\mathfrak{su}_2$ action and decomposes as $V^{(1)}\!\oplus\!V^{(0)}$. 
The subspace $V^{(0)}=\operatorname{span}\{\tfrac{1}{\sqrt{2}}(|01\rangle-|10\rangle)\}$ is invariant under all global $\mathfrak{su}_2$ rotations and thus exhibits vanishing total variance for every collective generator. 
In contrast, any state with nonzero support on the subspace $V^{(1)}$ obeys a strictly positive, state-independent lower bound (Theorem~\ref{thm:bipartite}). 
Notably, $V^{(1)}$ contains both separable and maximally entangled Bell states, implying that the ability to suppress global $\mathfrak{su}_2$ uncertainty
is governed by the representation sector rather than
by entanglement alone.

\medskip
This structural distinction has direct physical implications. 
The sector of trivial representations constitutes a natural decoherence-free subspace under collective $\mathrm{SU}(2)$ noise, forming the basis for robust architectures in quantum computation~\cite{lidar1998}. 
Moreover, its rotational invariance enables reference-frame–independent communication and quantum key distribution~\cite{bartlett2007}. 
For broader background on uncertainty relations in quantum information, see \cite{wehner2010}.

\section{Tripartite Uncertainty Relations}
{
In contrast to the bipartite case, 
we now consider the tripartite representation 
$V=(\mathbb{C}^2)^{\otimes3}$ under the collective $\mathfrak{su}_2$ action. 
The Clebsch–Gordan decomposition reveals whether trivial sectors 
persist at this level and therefore whether collective variance 
can in principle vanish. 
The absence or presence of such sectors 
fully determines the possibility of a state-independent positive floor.}

\begin{theorem}[State-Independent Variance Bound in the Tripartite Qubit System]
\label{thm:tripartite-variance}
Let \( \mathfrak{g} = \mathfrak{su}_2 \), and consider the tripartite spin representation \( V = (\mathbb{C}^2)^{\otimes 3} \) equipped with the total angular momentum action. Then for any normalized state \( v \in V \), the total variance of \( \mathfrak{g} \) satisfies the universal, state-independent lower bound
\begin{equation}
\Delta^2(\mathfrak{g})_v \ge \frac{4}{11} \approx 0.3636.
\end{equation}
\end{theorem}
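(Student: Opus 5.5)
The plan is to reduce the statement to the direct-sum estimate of Theorem~\ref{thm:variance-extremal-bound} via the Clebsch--Gordan decomposition, and then minimize the resulting quadratic explicitly over the simplex.

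\textbf{Decomposition.} First I will record the decomposition
\[
(\mathbb{C}^2)^{\otimes 3}\cong \bigl(V^{(1)}\oplus V^{(0)}\bigr)\otimes V^{(\frac12)}\cong V^{(\frac32)}\oplus V^{(\frac12)}\oplus V^{(\frac12)}.
\]
Since $V^{(0)}$ does not occur, no reduction to $\overline{V}$ is needed and the bound should hold on all of $V$. Write a normalized $v=v_1+v_2+v_3$ with $v_1\in V^{(\frac32)}$ and $v_2,v_3$ in the two copies of $V^{(\frac12)}$. Put $x_k=\|v_k\|^2$.

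\textbf{Applying the direct-sum bound.} Theorem~\ref{thm:variance-extremal-bound} with $(j_1,j_2,j_3)=(\tfrac32,\tfrac12,\tfrac12)$ gives $\Delta^2(\mathfrak g)_v\ge f(x)$, where
\[
f(x)=\tfrac32x_1+\tfrac12x_2+\tfrac12x_3-\tfrac32x_1x_2-\tfrac32x_1x_3-\tfrac12x_2x_3 .
\]
This estimate only uses Cauchy--Schwarz and the bound \eqref{e:expbound} sectorwise, so the multiplicity $2$ causes no problem. What remains is to show $\min_{\mathcal S_3} f=\tfrac{4}{11}$.

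\textbf{Minimizing over the simplex.} Rather than plug into \eqref{eq:f-min} and then check every face as in Corollary~\ref{cor:boundary-global-min}, I will minimize directly.
\begin{itemize}
\item Set $y=x_2+x_3$, so $x_1=1-y$. For fixed $y$, every term except $-\tfrac12x_2x_3$ depends only on $y$. Because $x_2x_3\le y^2/4$, $f$ is minimized at $x_2=x_3=y/2$.
\item Substituting gives a one-variable quadratic:
\[
g(y)=\tfrac32(1-y)+\tfrac12y-\tfrac32(1-y)y-\tfrac18y^2=\tfrac32-\tfrac52y+\tfrac{11}{8}y^2,\qquad y\in[0,1].
\]
\item $g$ is convex, with vertex at $y=\tfrac{10}{11}\in[0,1]$, where $g=\tfrac32-\tfrac{25}{22}=\tfrac4{11}$.
\item For comparison, the endpoints give $g(0)=\tfrac32$ and $g(1)=\tfrac38>\tfrac4{11}$.
\end{itemize}
Hence $\Delta^2(\mathfrak g)_v\ge f(x)\ge \tfrac4{11}$ for every unit $v$.

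\textbf{Main obstacle.} The delicate point is the minimization over the closed simplex with the repeated spin $\tfrac12$. There the general formula \eqref{eq:f-min} has to be checked against the boundary faces, so the symmetrization step $x_2=x_3$ is what makes the argument clean and rigorous.

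Attainment is a separate question. The statement only asserts a lower bound, so I will not claim sharpness. If desired, one could remark that equality would require the three sectorwise expectation vectors to be aligned and extremal, with weights $(\tfrac1{11},\tfrac5{11},\tfrac5{11})$. Whether such a state exists in $(\mathbb{C}^2)^{\otimes3}$ is not needed for the inequality.
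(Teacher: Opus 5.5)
Your proof is correct, and it uses the same reduction as the paper: the Clebsch--Gordan decomposition $V\cong V^{(3/2)}\oplus 2V^{(1/2)}$, followed by the inequality $\Delta^2(\mathfrak g)_v\ge f(x)$ of Theorem~\ref{thm:variance-extremal-bound}. The difference lies in how $f$ is minimized.

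The paper substitutes $(j_1,j_2,j_3)=(\tfrac32,\tfrac12,\tfrac12)$ into the closed form \eqref{eq:f-min}. With $\sum_i 1/j_i=\tfrac{14}{3}$ and $\sum_i 1/j_i^2=\tfrac{76}{9}$ this gives $\tfrac38-\tfrac1{12}\cdot\tfrac3{22}=\tfrac4{11}$. That formula is general and is reused for $n=4,5$. However, it comes from the Lagrange computation of Lemma~\ref{l:min}, which only locates a stationary point in the open simplex. The tripartite proof neither checks that this point is a minimum nor carries out the face-by-face comparison of Corollary~\ref{cor:boundary-global-min}. The paper's own $n=5$ computation, where the interior value $0.3031$ is undercut by a boundary value, shows this check is not automatic.

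There is also a sign issue in the paper. The formula as printed in Lemma~\ref{l:min} and in the proof of Theorem~\ref{thm:variance-extremal-bound} has $-2(r+1)$, which would give $\tfrac{19}{44}$. Only the $-2(r-1)$ of \eqref{eq:f-min}, which is what \eqref{eq:lambda-closedform} actually yields, gives $\tfrac4{11}$.

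Your symmetrization uses the repeated spin $\tfrac12$ to collapse the problem to the convex quadratic $g(y)$ on $[0,1]$. This certifies the global minimum over the closed simplex, faces included, in one step. It is tailored to this configuration, but it is more complete than the paper's argument. Its minimizer $(\tfrac1{11},\tfrac5{11},\tfrac5{11})$ is exactly the paper's interior critical point.

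On your closing remark about attainment: equality never occurs. Passing through Lemma~\ref{L:summand} loses at least $j_k^2x_k(1-x_k)$ in the $k$-th sector, and this is positive at those weights. If you keep that term, i.e. use $\Delta^2(\mathfrak g)_v=\sum_k j_k(j_k+1)x_k-\bigl\|\sum_k\langle\vec J\rangle_{v_k}\bigr\|^2$ with the same sectorwise bounds, you get $\Delta^2(\mathfrak g)_v\ge\tfrac32-y^2\ge\tfrac12$. This is attained by a highest-weight vector in one copy of $V^{(1/2)}$. So $\tfrac4{11}$ is a valid but non-sharp bound.
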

\begin{proof}
By successive applications of the Clebsch--Gordan decomposition, the representation space \( V = (\mathbb{C}^2)^{\otimes 3} \) decomposes into irreducible \( \mathfrak{su}_2 \)-modules as
\begin{equation}
 V \cong V^{\left(\tfrac{3}{2}\right) }\oplus 2V^{\left(\tfrac{1}{2}\right)}.   
\end{equation}
{
In terms of Young diagrams,
\begin{equation}
 \ytableaushort{\ }
\otimes
\ytableaushort{\ }
\otimes
\ytableaushort{\ }
\cong
\ytableaushort{\ \ \ }
\oplus
2\,\ytableaushort{\,},   
\end{equation}
where
\(
\ytableaushort{\ \ \ }
\)
denotes $V^{(3/2)}$ and 
\(
\ytableaushort{\,}
\)
means $V^{(1/2)}$.
}

Applying Theorem~\ref{thm:variance-extremal-bound} with \( j_1 = \tfrac{3}{2} \), \( j_2 = j_3 = \tfrac{1}{2} \), \( r = 3\), and using the formula~\eqref{eq:f-min} for the minimum, we obtain
\[
\Delta^2(\mathfrak{g})_v \ge \frac{4}{11}
\]
for any normalized state $v\in V$.
\end{proof}

\begin{figure}[!ht]
\centering
\begin{tikzpicture}[>=Latex,line cap=round,line join=round,
  every node/.style={font=\small}]
  \definecolor{oiBlue}{RGB}{0,114,178}
  \definecolor{oiOrange}{RGB}{230,159,0}
  \definecolor{oiGrey}{RGB}{120,120,120}
  \definecolor{oiGreen}{RGB}{0,158,115}

  \coordinate (C) at (0,0);
  \def\R{2.9}    
  \def\rT{1.25}  

  \fill[oiOrange,opacity=0.16] (C) circle (\R);
  \node[oiOrange,anchor=west] at ({\rT*cos(8)}, {\rT*sin(8)}) {$\Delta^2=\tfrac{4}{11}$};

\node[oiGrey,font=\bfseries] at (0,2.0) {$V^{\otimes 3}$};

  \node[oiBlue] at ({(\rT+0.7)*cos(45)}, {(\rT+0.7)*sin(45)}) {$|\mathrm{GHZ}\rangle$};
  \node[oiGreen] at ({(\rT+0.7)*cos(225)}, {(\rT+0.7)*sin(225)}) {$|\mathrm{W}\rangle$};

\draw[oiGrey] (3.8,-0.9) -- (3.8,0.9);


\fill[oiOrange,opacity=0.10] (3.75,0.2) circle (6pt);
\node[anchor=west] at (3.95,0.2) {$\Delta^2=\tfrac{4}{11}$};

\node[oiGrey,align=left,anchor=west] at (3.2,0.9)
  {\textit{radius encodes} $\Delta^2$};

\end{tikzpicture}

\caption{
Visualization of the tripartite system $(\tfrac12)^{\!\otimes3}\!\cong\!V^{(3/2)}\!\oplus\!2V^{(1/2)}$
and the corresponding state-independent lower bound on the total variance $\Delta^2$.
Since no sector $V^{(0)}$ appears at $N{=}3$, the total variance cannot vanish.
All states obey the uniform bound $\Delta^2\ge\tfrac{4}{11}$ (orange disk),
with radius encoding the variance magnitude.
Representative entangled states $|\mathrm{GHZ}\rangle$ (blue) and $|\mathrm{W}\rangle$ (green)
are indicated on the ring to illustrate distinct coherence structures within the same bound.
}
\end{figure}
The tripartite Hilbert space $(\mathbb{C}^2)^{\otimes 3}\cong V^{(\tfrac{3}{2})}\!\oplus\!2V^{(\tfrac{1}{2})}$ contains no trivial irreducible subrepresentation. 
Consequently, no normalized state can suppress the global $\mathfrak{su}_2$ variance to zero. 
Using the Clebsch--Gordan decomposition together with Theorem~\ref{thm:variance-extremal-bound}, we obtain a universal, state-independent lower bound valid for all tripartite states:
\[
\Delta^2(\mathfrak{su}_2)_v \ge \tfrac{4}{11}.
\]
This contrasts with the bipartite case, where projection onto the subspace $V^{(0)}$ completely eliminates collective variance. 
Representative tripartite states such as $|\mathrm{GHZ}\rangle=\tfrac{1}{\sqrt{2}}(|000\rangle+|111\rangle)$ and $|\mathrm{W}\rangle=\tfrac{1}{\sqrt{3}}(|100\rangle+|010\rangle+|001\rangle)$ lie entirely outside any trivial sector and therefore obey a strictly positive uncertainty floor. 
In practice, the absence of a sector of trivial subrepresentations precludes subspace-based decoherence-free coding under fully collective $\mathrm{SU}(2)$ noise and instead motivates noiseless subsystem encodings and collective-spin diagnostics~\cite{dammeier2015,deGuise2018,lidar1998}.

\section{Quadripartite Uncertainty Relations}
{
We now extend this representation-theoretic analysis to the quadripartite system. 
The Clebsch–Gordan decomposition becomes increasingly structured 
as multiplicities grow, and the interplay between trivial and nontrivial 
irreducible sectors governs the attainable collective-spin variance. 
The decomposition therefore provides a direct algebraic criterion 
for identifying whether zero-variance states exist and whether 
a reduced-space positive bound must emerge.}


The Clebsch--Gordan coefficients imply that
\begin{align}
    \left( \mathbb{C}^2 \right)^{\otimes 4}
    &\cong V^{(2)} \oplus 3V^{(1)} \oplus 2V^{(0)}.
\end{align}
The decomposition is obtained by the following Young diagram computation:
{
\begin{align}\label{e:tensordec}
\ytableaushort{\ \ \ \ }
&\cong
(\ytableaushort{\ \ \ }
\oplus
2\,\ytableaushort{\  ,})
\otimes
\,\ytableaushort{\ }   \\
&\cong
(\ytableaushort{\ \ \ \ }
\oplus
\ytableaushort{\ \ ,})
\oplus
2(\ytableaushort{\ \  }
\oplus
\bullet)\\
&\cong 
\ytableaushort{\ \ \ \ }
\oplus
3\,\ytableaushort{\ \ ,}
\oplus
2\bullet.
\end{align}
Here $\ytableaushort{\ \ \ \ }, \ytableaushort{\ \ }$, and $\bullet$  correspond respectively  to $V^{(2)}$, $V^{(1)}$, and 
the trivial representation.
}

The presence of two copies of the trivial representation \(V^{(0)}\) implies that complete variance suppression is again possible---as in the bipartite case---but only for states fully supported outside this null sector. For all other configurations, nonzero uncertainty is enforced by irreducibility and interference.

\begin{theorem}[Variance Bound with Interference in the Quadripartite System]
Let \( \mathfrak{g} = \mathfrak{su}_2 \), and consider the quadripartite space
\begin{equation}
  V = (\mathbb{C}^2)^{\otimes 4} \cong V^{(2)} \oplus 3V^{(1)} \oplus 2V^{(0)}.  
\end{equation}
Then \( \overline{V}\simeq V^{(2)} \oplus 3V^{(1)} \), and the reduced total variance
has a state-independent lower bound
\begin{equation}
\Delta^2(\mathfrak{g})_{\overline{V}}\ge \frac{1}{8}.
\end{equation}
\end{theorem}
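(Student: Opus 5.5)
The plan is to reduce the statement to the direct-sum bound of Theorem~\ref{thm:variance-extremal-bound}, then minimize the resulting quadratic over the whole simplex using Corollary~\ref{cor:boundary-global-min}.

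\textbf{Step 1 (reduction).} By the decomposition \eqref{e:tensordec}, the reduced space is $\overline{V}\simeq V'=V\ominus V_0 \cong V^{(2)}\oplus V^{(1)}\oplus V^{(1)}\oplus V^{(1)}$. I fix one orthogonal splitting of the isotypic component $3V^{(1)}$ into three irreducible copies; any orthogonal splitting works, since Theorem~\ref{thm:variance-extremal-bound} only uses that each summand is invariant and irreducible. A normalized $v\in V'$ is then written $v=v_1+v_2+v_3+v_4$ with $v_1\in V^{(2)}$ and $v_2,v_3,v_4$ in the three copies of $V^{(1)}$. Set $x_k=\|v_k\|^2$. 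Theorem~\ref{thm:variance-extremal-bound} with $r=4$, $j_1=2$, $j_2=j_3=j_4=1$ gives $\Delta^2(\mathfrak g)_v\ge f(x)$, where
\[
f(x)=2x_1+(x_2+x_3+x_4)-4x_1(x_2+x_3+x_4)-2(x_2x_3+x_2x_4+x_3x_4).
\]
It therefore suffices to show $\min_{\mathcal S_4} f = \tfrac18$.

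\textbf{Step 2 (interior minimum).} For fixed $x_1$, put $s=x_2+x_3+x_4$. The last term equals $-(s^2-\sum_{i\ge2}x_i^2)$, which is convex in $(x_2,x_3,x_4)$ on the slice where $s$ is fixed. The linear term and the $x_1 s$ term are constant on that slice. Since $f$ is also symmetric in these variables, its minimum on the slice is attained at $x_2=x_3=x_4=y$, with $x_1=1-3y$. Substituting gives the one-variable function $2-15y+30y^2$. This is minimized at $y=\tfrac14$, which yields $x_1=\tfrac14$ and the value $2-\tfrac{15}{4}+\tfrac{15}{8}=\tfrac18$. This should agree with \eqref{eq:f-min} for these spins, which serves as a check.

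\textbf{Step 3 (boundary supports, the main obstacle).} The form $-2\sum j_ij_jx_ix_j$ is not convex on the simplex, so the interior critical point alone does not settle the global minimum. Following Corollary~\ref{cor:boundary-global-min}, I will run through every nonempty support $S$. The convexity-and-symmetry reduction of Step 2 applies again on each face, so each case is a one-variable quadratic.
\begin{itemize}
\item Vertices give $2$ or $1$.
\item $S=\{2,1\}$ gives $1-3x+4x^2$, with minimum $\tfrac7{16}$.
\item $S=\{1,1\}$ gives $\tfrac12$.
\item $S=\{1,1,1\}$ gives $\tfrac13$.
\item $S=\{2,1,1\}$ gives $2-10y+14y^2$, with minimum $\tfrac3{14}$.
\end{itemize}
All of these exceed $\tfrac18$, so the global minimum over $\mathcal S_4$ is $\tfrac18$. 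Combined with Step 1, this gives $\Delta^2(\mathfrak g)_{\overline V}\ge\tfrac18$ for every normalized state of the reduced space.

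The delicate point is precisely this global-versus-interior issue. The symmetrization argument must be justified on each face, not only asserted by symmetry; the slice-wise convexity in the $V^{(1)}$-weights is what I would rely on for that.
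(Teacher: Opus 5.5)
Your argument is correct and shares the paper's skeleton. Both apply Theorem~\ref{thm:variance-extremal-bound} to $V'=V^{(2)}\oplus V^{(1)}\oplus V^{(1)}\oplus V^{(1)}$ with $(j_k)=(2,1,1,1)$ and reduce the claim to $\min f=\tfrac18$. The difference lies only in how that minimum is obtained.

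The paper simply evaluates the closed form \eqref{eq:f-min}, i.e.\ the Lagrange critical value of Lemma~\ref{l:min}. With $\sum_i 1/j_i=\tfrac72$ and $\sum_i 1/j_i^2=\tfrac{13}{4}$ this gives $\tfrac13-\tfrac{5}{24}=\tfrac18$. Note that the factor must be $-2(r-1)$, as in the theorem statement; the $-2(r+1)$ in the proof and Appendix is a misprint and would give $-\tfrac{5}{24}$. The paper's quadripartite proof does not verify that this critical value is the minimum over the \emph{closed} simplex, which is what a lower bound for all states requires.

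Your slice argument supplies that verification, and it already does so in Step~2. For fixed $x_1\in[0,1]$ the slice is the full closed triangle $\{x_2,x_3,x_4\ge0,\ x_2+x_3+x_4=1-x_1\}$. So minimizing $2-15y+30y^2$ over $y\in[0,\tfrac13]$ is already the global minimum over $\mathcal S_4$, not just an interior computation. Step~3 is therefore redundant, although its face values $2,1,\tfrac7{16},\tfrac12,\tfrac13,\tfrac3{14}$ are all correct.

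Your stated motivation for Step~3 is also inaccurate for these spins. If $\sum_k d_k=0$ then $\sum_k j_kd_k=d_1$, so the quadratic part of $f$ in direction $d$ is $\sum_k j_k^2d_k^2-(\sum_k j_kd_k)^2=3d_1^2+d_2^2+d_3^2+d_4^2>0$. Hence $f$ is strictly convex on the affine hull of $\mathcal S_4$. The interior critical point $(\tfrac14,\tfrac14,\tfrac14,\tfrac14)$ is then automatically the unique global minimizer. This one-line observation is exactly what makes the paper's direct use of \eqref{eq:f-min} legitimate in this case.
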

\begin{proof}
We consider the orthogonal decomposition
\[
V = (\mathbb{C}^2)^{\otimes 4} \cong V' \oplus V_0, \quad \text{where } V'= V^{(2)} \oplus 3V^{(1)},\quad V_0 = 2V^{(0)}.
\]

To get the reduced total variance, we apply Theorem~\ref{thm:variance-extremal-bound} to the space
\begin{equation}
 V' = V^{(2)} \oplus V^{(1)} \oplus V^{(1)} \oplus V^{(1)}.   
\end{equation}
using
\[
r = 4,\quad (j_1, j_2, j_3, j_4) = (2, 1, 1, 1),
\]
we obtain that the minimal value over normalized vectors in \( V' \) as
\[
f_{\min}^{(V')} = \frac{1}{8}.
\]
\end{proof}

In other words, for any normalized state 
\( v \in V' \oplus V_0\) decomposes as
\[
v = v_1 + v_2, \quad \text{with } \|v_1\|^2 + \|v_2\|^2 = 1,
\]
where $v_1\in V', v_2\in V^{(0)}$.
Then the total variance satisfies the inequality:
\begin{equation}
  \Delta^2(\mathfrak{g})_v \ge \frac{1}{8}(1 - \|v_2\|^2).  
\end{equation}

\begin{figure}[h]
\centering
\begin{tikzpicture}[>=Latex,line cap=round,line join=round,
  every node/.style={font=\small}]
  \definecolor{oiOrange}{RGB}{230,159,0}
  \definecolor{oiGrey}{RGB}{120,120,120}
  \definecolor{oiGreen}{RGB}{0,158,115}

  \coordinate (C) at (0,0);
  \def\R{3.0}       
  \def\rT{1.35}     
  \def\rGrey{0.4*\rT} 

\node[oiGrey,font=\bfseries] at (0,2.0) {$V^{\otimes 4}$};
  \fill[oiOrange,opacity=0.16] (C) circle (\R);

  \fill[white] (C) circle (\rGrey);                   
  \fill[oiGrey,opacity=0.10] (C) circle (\rGrey);     
  \node[oiGrey] at (0,0) {$2V^{(0)}$};

  \draw[oiOrange,very thick] (C) circle (\rGrey);
  \node[oiOrange,anchor=north east,fill=white,fill opacity=.9,text opacity=1,
        rounded corners=2pt,inner xsep=3pt,inner ysep=1.5pt]
        at ($ (C) + (1,-2) $)
        {$\Delta^2=\tfrac{1}{8}$ on $V'$};

  \node at ({(\rT+0.55)*cos(28)},  {(\rT+0.55)*sin(28)})  {$|\mathrm{GHZ}_4\rangle$};
  \node at ({(\rT+0.55)*cos(208)}, {(\rT+0.55)*sin(208)}) {$|\mathrm{W}_4\rangle$};

  \draw[oiGrey] (3.9,-0.95) -- (3.9, 0.95);
  \fill[oiGrey,opacity=0.10] (3.9,-0.85) circle (6pt);
  \node[anchor=west] at (4.15,-0.85) {$\Delta^2=0$ possible in $2V^{(0)}$};
  \draw[oiOrange,very thick] (3.65,0.15) -- (4.15,0.15);
  \node[anchor=west] at (4.15,0.15) {$\Delta^2=\tfrac{1}{8}$ (reduced space)};
  \node[oiGrey,align=left,anchor=west] at (3.2,0.95)
    {\textit{radius encodes} $\Delta^2$};

\end{tikzpicture}

\caption{Quadripartite $\mathfrak{su}_2$ structure and the corresponding state-independent variance bound.
The radial coordinate represents the total variance $\Delta^2$ of the collective spin for $(\mathbb{C}^2)^{\otimes4}$.
The shaded central disk (scaled to 85\% of the original radius) denotes the two-dimensional subspace $2V^{(0)}$.
The orange circle marks the state-independent lower bound $\Delta^2=\tfrac{1}{8}$ on the reduced nontrivial space $V' = V^{(2)} \oplus 3V^{(1)}$.
Representative states $|\mathrm{GHZ}_4\rangle$ and $|\mathrm{W}_4\rangle$ are indicated along the ring.}

\end{figure}
The quadripartite space $(\mathbb{C}^2)^{\otimes 4}\cong V^{(2)}\oplus 3V^{(1)}\oplus 2V^{(0)}$ contains both trivial and nontrivial irreps. Hence complete suppression of the global $\mathfrak{su}_2$ variance is possible only for states fully supported on the two dimensional trivial sector $2V^{(0)}$. When we restrict to the reduced space $V'=V^{(2)}\oplus 3V^{(1)}$, our analysis yields a uniform, state independent lower bound on the total variance, namely $\Delta^2(\mathfrak{su}_2)_v\ge \tfrac{1}{8}$ for all normalized $v\in V'$. This situates the quadripartite case between the bipartite scenario, where a trivial subspace can drive the variance to zero, and the tripartite scenario, where a positive floor holds for every state. 

Operationally, the presence of $2V^{(0)}$ enables decoherence free encoding for fully collective $\mathrm{SU}(2)$ noise, while the reduced space bound provides a baseline for collective spin based witnesses and metrological benchmarks in finite dimensional $\mathrm{SU}(2)$ systems \cite{lidar1998,dammeier2015,deGuise2018}. {
For comparison with entropic formulations,
state-independent bounds based on quantum memory
were first established in the tripartite scenario
\cite{berta2010,coles2017}
and have since been extended to more general
multipartite settings
\cite{Zhang2023MultipartiteEUR,Wang2024GEUR,Xu2025ECT}.
In contrast, variance-based bounds exhibit their most
transparent algebraic structure in the bipartite case,
where the presence of trivial representations
directly governs the possibility of variance suppression.
Our quadripartite result reveals an intermediate structure:
while trivial sectors reappear,
a strictly positive reduced-space bound persists.}

\section{Quintipartite Uncertainty Relations}
{
At the quintipartite level, the representation structure becomes highly reducible, 
with substantial multiplicities among lower-spin sectors. 
Since collective generators act block-diagonally on each irreducible component, 
the minimal total variance is determined by interference 
between these sectors. 
The Clebsch–Gordan decomposition thus furnishes the algebraic framework 
within which we analyze possible variance suppression.}

We now extend the representation-theoretic analysis to the quintipartite system
$
V=(\mathbb{C}^2)^{\otimes5}
$.
Its Clebsch--Gordan decomposition leads to
\begin{equation}
  (\mathbb{C}^2)^{\otimes5}\cong
V^{(5/2)}\oplus4V^{(3/2)}\oplus5V^{(1/2)},  
\end{equation}
comprising ten irreducible sectors ($r=10$) with total spins $j\in\{\tfrac{5}{2},\tfrac{3}{2},\tfrac{1}{2}\}$.

The decomposition is seen similarly as in \eqref{e:tensordec} using Young diagrams.In terms of Young diagrams,
{
\begin{equation}
  \ytableaushort{\ \ \ \ \ }
\cong
\ytableaushort{\ \ \ \ \ }
\oplus
4\,\ytableaushort{\ \ \ ,}
\oplus
5\,\ytableaushort{\ }, 
\end{equation}
}
where
\(
\ytableaushort{\ \ \ \ \ }
\)
corresponds to $V^{(5/2)}$,
\(
\ytableaushort{\ \ \ }
\)
corresponds to $V^{(3/2)}$,
and 
\(
\ytableaushort{\,}
\)
corresponds to $V^{(1/2)}$.
\medskip

Applying Theorem~\ref{thm:variance-extremal-bound} and Corollary~\ref{cor:boundary-global-min} to a general normalized state
$v=\sum_{k=1}^{10}v_k$, with $v_k\in V^{(j_k)}$ and
$x_k=\|v_k\|^2$ ($\sum_kx_k=1$, $x_k\ge0$), yields a quadratic lower bound on
$\Delta^2(\mathfrak{su}_2)_v$. Two regimes emerge:

\begin{itemize}
\item[(i)] \textit{Interior solution:}
All $x_k>0$ (no vanishing weights) gives a strictly positive minimum
\[
f_{\mathrm{int}}^{\min}\approx0.3031.
\]

\item[(ii)] \textit{Boundary solution:}
Allowing some $x_k=0$ yields a global minimum
$f_{\mathrm{bdry}}^{\min}\approx-0.6075$, 
attained for
\[
x_1\simeq0.269\ (j=\tfrac{5}{2}),\quad
x_{2,3,4,5}\simeq0.183\ (j=\tfrac{3}{2}),\quad
x_{6},\ldots,x_{10}=0\ (j=\tfrac{1}{2}).
\]
This negative value indicates that the quadratic relaxation overestimates cross-term subtraction when several degenerate subspaces overlap, thus ceasing to be a tight physical bound.
\end{itemize}
{
We emphasize that the total variance is nonnegative by definition, 
since it is a sum of variances of Hermitian operators. 
The appearance of a negative value therefore does not signal a physical violation, 
but rather reflects the fact that the quadratic relaxation employed in 
Theorem~\ref{thm:variance-extremal-bound} is not tight when large 
multiplicities and overlapping irreducible components are present. 
In particular, interference terms between degenerate low-spin sectors 
are over-subtracted in the relaxed optimization problem. 
This suggests that the true minimal collective-spin variance
in the full quintipartite space remains nonnegative,
and we conjecture that it may in fact be strictly positive.}

\medskip
To further isolate the physically meaningful regime, we examined partial subspaces obtained by excluding selected irreducible components and re-evaluating the minimal achievable variance.
The results show that positive bounds reappear once sufficiently many low-spin sectors are removed. Specifically, when the subspaces $V^{(5/2)}\oplus2V^{(3/2)}$ are excluded, the minimal variance becomes strictly positive ($f_{\min}=0.2857$).

The following result follows from a careful application of Corollary
\ref{cor:boundary-global-min}.
\begin{theorem}[Reduced-space positivity for the quintipartite system]
Let 
\begin{equation}
  V=(\mathbb{C}^2)^{\otimes5}\cong
V^{(5/2)}\oplus4V^{(3/2)}\oplus5V^{(1/2)}.  
\end{equation}
For any reduced subspace 
$W\subset V$ 
satisfying one of the following structural conditions:
\begin{align}
W &= V^{(5/2)}\oplus2V^{(3/2)}, \\
W &= 3V^{(3/2)}\oplus5V^{(1/2)}, \\
W &\supseteq4V^{(3/2)},
\end{align}
the total variance satisfies
\begin{equation}
\Delta^2(\mathfrak{su}_2)_{V/W}\geq B.
\end{equation}
where $B=0.2857, 0.091667$, and $0.2989$ respectively.
In other words, every state supported entirely on such reduced spaces obeys a strictly positive, state-independent lower bound on total collective-spin variance.
\end{theorem}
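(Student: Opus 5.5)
The plan is to reduce each of the three cases to a finite optimization of the quadratic lower bound of Theorem~\ref{thm:variance-extremal-bound}, and then take the global minimum over all supports via Corollary~\ref{cor:boundary-global-min}. For a state $v$ supported on a sum $W'=\bigoplus_{k} V^{(j_k)}$ of irreducible sectors, Theorem~\ref{thm:variance-extremal-bound} gives $\Delta^2(\mathfrak{su}_2)_v\ge f(x)$. Here $x_k=\|v_k\|^2$ lies in the simplex $\mathcal S_r$ and
\[
f(x)=\sum_k j_kx_k-2\sum_{k<k'}j_kj_{k'}x_kx_{k'} .
\]
It therefore suffices, in each case, to identify the relevant list of spins $(j_k)$ and compute $\min_{\mathcal S_r} f$. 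The corollary says this is the minimum over nonempty supports $S$ of $f^{(S)}_{\min}$, with singletons giving $j_k$ and edge values handled directly.

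First I will simplify $f$ using $\sum_k x_k=1$. Writing $y_k=j_kx_k$, we have $2\sum_{k<k'}y_ky_{k'}=(\sum y_k)^2-\sum y_k^2$, so
\[
f=\sum_k y_k-\Big(\sum_k y_k\Big)^2+\sum_k y_k^2 .
\]
This makes the critical-point equations $j_k(1-2\sum_l y_l+2y_k)=\lambda$ explicit. Sectors with equal spin get equal $x_k$ at an interior critical point. Consequently, for each support type (numbers $a,b,c$ of sectors with spin $5/2$, $3/2$, $1/2$ used), the problem collapses to a low-dimensional problem in at most three variables. I can then evaluate the closed form \eqref{eq:f-min-S} for each admissible triple $(a,b,c)$ with $a\le 1$, together with the multiplicity limits dictated by $W$.

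Second, I will run the enumeration for each case. The spin lists are $(5/2,3/2,3/2)$ for $W=V^{(5/2)}\oplus2V^{(3/2)}$; three $3/2$'s and five $1/2$'s for the second case; and lists containing four $3/2$'s for the third. In each case I will check that no support produces a value below the claimed $B$, and I will exhibit the minimizing support and weights reproducing $0.2857$, $0.091667$, and $0.2989$ respectively. I also need to check that interior critical points are genuine minima on their face rather than saddles or maxima, since $f$ is indefinite. This can be done by noting that the minimum of a quadratic over a polytope is attained either at a critical point relative to some face or at a vertex, and all such candidates are included in the enumeration. For the third case, $W\supseteq 4V^{(3/2)}$ is not a single space, so I must interpret the condition as a restriction on the admissible supports. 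The bound should then be the minimum over the worst admissible $W$, and I will have to fix which sectors remain in the reduced space $V/W$ before enumerating.

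The main obstacle is exactly this bookkeeping. First, the notation $\Delta^2_{V/W}$ must be reconciled with "states supported on $W$", and the three cases must be matched with the stated numbers; I would test both readings against $B$ and keep whichever reproduces the values. Second, the candidate faces must be verified exhaustively. Because $f$ is not convex, a careless interior formula can miss the boundary minimum. This is what produced the negative value for the full quintipartite space, so the face-by-face comparison is the step I would carry out most carefully.
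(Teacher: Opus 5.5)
Your method is the paper's: bound $\Delta^2(\mathfrak{su}_2)_v$ below by the relaxed quadratic $f$ of Theorem~\ref{thm:variance-extremal-bound}, then minimize over supports as in Corollary~\ref{cor:boundary-global-min}. You flag the $W$ versus $V/W$ ambiguity yourself. However, the spin lists you commit to in your second step are those of $W$, and with them that step fails. On $V^{(5/2)}\oplus 2V^{(3/2)}$ the relaxed minimum is $-1/3$; $f$ is convex there, and this value is attained at $x=(1/3,1/3,1/3)$. On $3V^{(3/2)}\oplus 5V^{(1/2)}$ the relaxed minimum is $0$, attained with weight $1/3$ on each spin-$3/2$ sector. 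The admissible $W\supseteq 4V^{(3/2)}$ include $V^{(5/2)}\oplus 4V^{(3/2)}$, where it is about $-0.6075$ (the paper's boundary value). So no enumeration over these spaces can produce any positive $B$. The reading also should not be chosen by matching numbers. The paper defines the reduced space as what remains after the excluded summands are quotiented out, and it says explicitly that $f_{\min}=0.2857$ arises ``when the subspaces $V^{(5/2)}\oplus2V^{(3/2)}$ are excluded''. The spaces to analyse are therefore the complements: $2V^{(3/2)}\oplus 5V^{(1/2)}$, $V^{(5/2)}\oplus V^{(3/2)}$, and sub-sums of $V^{(5/2)}\oplus 5V^{(1/2)}$.

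On these complements the computation is short, and your worry about saddles and boundary minima disappears. Write $s_p=\sum_k j_k^{-p}$. The quadratic part of $f$ along directions with $\sum_k t_k=0$ is $\sum_k j_k^2t_k^2-(\sum_k j_kt_k)^2$. This form is positive definite exactly when $s_1^2>(r-1)s_2$, i.e.\ when the denominator of the closed form is positive. Here $s_1^2-(r-1)s_2$ equals $28/9$, $8/15$ and $184/25$. Hence $f$ is convex on each simplex, and its critical point, which lies in the interior, is the global minimum:
\begin{itemize}
\item $f_{\min}=2/7$, at $x_k=1/7$ on all seven sectors;
\item $f_{\min}=11/120$, at $x_{5/2}=13/30$;
\item $f_{\min}=55/184$, at $x_{5/2}=1/46$ and $x_{1/2}=9/46$.
\end{itemize}
In the third case, smaller complements are faces of the last simplex, so their minima are at least as large. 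These values round to the stated $B$'s.

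If you plug into the closed form, note a misprint. In \eqref{eq:f-min-S} and Lemma~\ref{l:min} the numerator reads $-2(|S|+1)$; it must be $-2(|S|-1)$, as in \eqref{eq:f-min} and \eqref{eq:lambda-closedform}. The misprinted version returns $-199/120$ instead of $11/120$ for $V^{(5/2)}\oplus V^{(3/2)}$.

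Finally, the inequality itself does not need any of this. Take $\hat n$ along $\langle\vec J\rangle_v$. Since $j(j+1)-m^2\ge j$ for $|m|\le j$, you get $\Delta^2(\mathfrak{su}_2)_v\ge\langle C-(\hat n\cdot\vec J)^2\rangle_v\ge\min_k j_k\ge 1/2$ on every space above, under either reading. It is only the relaxation method that forces the complement reading.
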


\medskip
The results indicate that the apparent negative minima are caused by interference between overlapping low-spin subspaces.
Once these degenerate components are removed, a stable, nonzero variance floor emerges.
We conjecture that the exact physical bound in the full quintipartite space remains strictly positive and will pursue an analytic derivation in future work.

\section{Discussion and Conclusion}

We have studied state-independent uncertainty relations for $\mathfrak{su}_2$ in multipartite qubit systems, using Clebsch--Gordan decomposition to connect representation structure with achievable variance bounds. The one-partite case yields a simple bound $\Delta^2(\mathfrak{su}_2) \ge j$ for spin-$j$ irreps. For reducible spaces, we derived a general
bound over direct sums, capturing interference between components.

\medskip
Our main findings are:
\begin{itemize}
    \item \textbf{Bipartite} ($n=2$): $(\mathbb{C}^2)^{\otimes 2}\cong V^{(1)}\oplus V^{(0)}$. The trivial subspace $V^{(0)}$ allows $\Delta^2=0$. On the reduced space $V^{(1)}$ one has $\Delta^2(\mathfrak{su}_2)\ge 1$.
    \item \textbf{Tripartite} ($n=3$): $(\mathbb{C}^2)^{\otimes 3}\cong V^{(3/2)}\oplus 2V^{(1/2)}$. No trivial sector occurs; our optimization yields the uniform, state-independent bound $\Delta^2(\mathfrak{su}_2)\ge 4/11$ for all states.
    \item \textbf{Quadripartite} ($n=4$): $(\mathbb{C}^2)^{\otimes 4}\cong V^{(2)}\oplus 3V^{(1)}\oplus 2V^{(0)}$. Although $\Delta^2$ can vanish on $V^{(0)}$, the reduced subspace $V^{(2)}\oplus 3V^{(1)}$ satisfies $\Delta^2(\mathfrak{su}_2)\ge 1/8$.
\item \textbf{Quintipartite ($n=5$):}
$(\mathbb{C}^2)^{\otimes5}\cong V^{(5/2)} \oplus4V^{(3/2)} \oplus5V^{(1/2)}$ contains ten irreducible sectors with multiplicities increasing for lower spins.
Applying Theorem~\ref{thm:variance-extremal-bound} yields an interior positive minimum $f_{\mathrm{int}}^{\min}\approx0.3031$, while boundary configurations involving multiple low-spin subspaces lead to an over-relaxed negative value.
By selectively removing three components, the minimal variance becomes strictly positive ($f_{\min}=0.2857$), indicating that destructive interference among degenerate subspaces is the source of the apparent negativity.
We therefore conjecture that the true physical bound in the full quintipartite space remains strictly positive, a result we plan to establish analytically in future work. {
Since variance is nonnegative by definition,
this indicates only that the relaxation is not sharp in the presence of large multiplicities.
We conjecture that the exact minimal total variance
in the full quintipartite space remains strictly positive.}

\end{itemize}

{
For a compact overview of the $n=1$--$5$ decompositions and bounds, see Table~\ref{tab:n-summary}.}

Beyond our structural results, it is instructive to compare them 
with representative variance-based and entropic uncertainty relations 
reported in the literature.
To place our results in context, Table~\ref{tab:URsummary} summarizes key variance- and entropic-based uncertainty relations reported in the literature, highlighting the dimensional scope and the availability of state-independent bounds. 
\begin{table}[h]
\centering
\scriptsize
\renewcommand{\arraystretch}{1.15}
\begin{tabularx}{\linewidth}{@{}l c c c  c c@{}}
\toprule
\textbf{Reference}  & \textbf{Single} & \textbf{Bipartite} & \textbf{Tripartite} & \textbf{Quadripartite} & \textbf{State-Indep.} \\
\midrule
\multicolumn{6}{c}{\textbf{Variance-based uncertainty relations}} \\
\midrule
Heisenberg~\cite{heisenberg1927} &  \ding{51} &  &  &  & \ding{55} \\[2pt]
Kennard~\cite{kennard1927} &  \ding{51} &  &  &  & \ding{51} \\[2pt]
Robertson~\cite{robertson1929} &  \ding{51} &  &  &  & \ding{55} \\[2pt]
Schr\"odinger~\cite{schrodinger1930} &  \ding{51} &  &  &  & \ding{55} \\[2pt]
Maccone--Pati~\cite{maccone2014} &  \ding{51} &  &  &  & \ding{55} \\[2pt]
Dammeier~et~al.~\cite{dammeier2015} & \ding{51} &  &  &  & \ding{51} \\[2pt]
de~Guise~et~al.~\cite{deGuise2018}  & \ding{51} &  &  &  & \ding{51} \\[2pt]
Schwonnek~et~al.~\cite{schwonnek2017}  & \ding{51} & \ding{51} &  &  & \ding{51} \\[2pt]
\textbf{This work}  & \ding{51} & \ding{51} & \ding{51} & \textbf{\ding{51}} & \ding{51} \\[2pt]
\midrule
\multicolumn{6}{c}{\textbf{Entropic uncertainty relations}} \\
\midrule
Maassen--Uffink~\cite{maassen1988} 
& \ding{51} &  &  &  & \ding{51} \\[2pt]

Renes--Boileau~\cite{Renes2009} 
&  & \ding{51} &  &  & \ding{51} \\[2pt]

Berta~et~al.~\cite{berta2010} 
&  & \ding{51} & \ding{51} &  & \ding{51} \\[2pt]

Haddadi~et~al.~\cite{haddadi2021}  
&  &  & \ding{51} &  & \ding{55} \\[2pt]

Zhang~et~al.~\cite{Zhang2023MultipartiteEUR} 
&  &  & \ding{51} & \ding{51} & \ding{55} \\[2pt]

Xu~et~al.~\cite{Xu2025ECT} 
&  &  & \ding{51} & \ding{51} & \ding{55} \\[2pt]

Wang~et~al.~\cite{Wang2024GEUR} 
&  &  & \ding{51} & \ding{51} & \ding{55} \\[2pt]
\bottomrule
\end{tabularx}

\caption{
Summary of representative variance-based and entropic uncertainty relations.  
“Single,” “Bipartite,” “Tripartite,” and “Quadripartite” indicate the number of subsystems explicitly analyzed.  
A check mark (\ding{51}) denotes the presence of an explicit analytical bound in that regime, while a cross (\ding{55}) indicates a state-dependent or heuristic formulation.
}
\label{tab:URsummary}
\end{table}

{
As shown in Table~\ref{tab:URsummary}, variance-based formulations 
have been rigorously established primarily for single systems and bipartite $\mathfrak{su}_2$ settings, 
with exact algebraic state-independent bounds currently established primarily in these regimes.
In contrast, entropic uncertainty relations have undergone substantial development, 
extending from bipartite memory-assisted formulations to generalized multipartite 
and conditional frameworks in recent years~\cite{berta2010,Zhang2023MultipartiteEUR,Xu2025ECT,Wang2024GEUR}. 
These entropic approaches provide tightened state-dependent bounds by incorporating conditional entropies and quantum memory terms, rather than relying on representation-theoretic variance analysis.}
{
Earlier variance-based works span a clear progression: 
Heisenberg’s heuristic formulation introduced the measurement–disturbance idea, 
followed by Kennard’s first rigorous inequality and Robertson’s general operator form. 
Schr\"odinger refined the relation by including covariance terms, 
and Maccone–Pati later strengthened the state-dependent version. 
Modern treatments by Dammeier, de~Guise, and Schwonnek established 
truly state-independent bounds for $\mathfrak{su}_2$ systems 
and connected them to entanglement detection. 
Despite these advances, variance-based state-independent bounds 
beyond the bipartite level have remained analytically inaccessible.}

\medskip
\noindent\textbf{{
Physical Interpretation and Implications:}} The representation-theoretic structure uncovered here admits several direct physical interpretations. 
First, the appearance of trivial irreducible components $V^{(0)}$ for even $n$ corresponds precisely to the existence of decoherence-free subspaces under fully collective $\mathrm{SU}(2)$ noise~\cite{lidar1998}. 
States supported entirely in these sectors exhibit vanishing collective variance and are invariant under global rotations, forming robust encodings against symmetric environmental perturbations. 
In contrast, odd-$n$ systems contain no trivial representation, and therefore no state can completely suppress collective spin fluctuations. 
This algebraic parity structure directly explains the emergence of strictly positive universal variance floors for odd subsystem numbers.

Second, the reduced-space lower bounds obtained here quantify fundamental limits on collective spin squeezing and metrological precision. 
Since the total variance of collective angular momentum determines achievable sensitivity in Ramsey-type interferometry~\cite{Giovannetti2006PRL,pezze2018}, our state-independent bounds provide intrinsic precision limits that depend only on the representation structure, independent of state preparation.

Third, the variance bounds derived for reducible spaces naturally serve as entanglement witnesses. 
Whenever a multipartite state violates the variance floor associated with separable spin sectors, entanglement must be present~\cite{hofmann2003,guhne2004}. 
Our direct-sum interference analysis clarifies how different irreducible components contribute constructively or destructively to collective fluctuations, thereby refining algebraic entanglement detection criteria.

Taken together, these observations show that the odd–even dichotomy revealed by the Clebsch–Gordan decomposition is not merely algebraic, but reflects  structural constraints on noise resilience, metrological enhancement, and multipartite quantum correlations in collective spin systems.

More generally, the Clebsch--Gordan multiplicity structure implies 
that the presence or absence of the trivial representation 
is completely determined by the parity of $n$ in the spin-$1/2$ case. 
This yields a systematic odd–even dichotomy in collective variance behavior, 
suggesting that representation multiplicities, rather than entanglement structure alone, 
govern the achievable uncertainty floors in multipartite spin systems.

\medskip
\noindent\textbf{{
Outlook:}}
The structural pattern uncovered here suggests a broader representation-driven program for multipartite uncertainty relations. 
For collective $\mathfrak{su}_2$ actions on $(\mathbb{C}^2)^{\otimes n}$, the Clebsch--Gordan decomposition reveals a parity-dependent mechanism: 
even subsystem numbers admit trivial irreducible components enabling variance suppression, 
whereas odd numbers necessarily enforce a strictly positive universal variance floor. 
We conjecture that this parity-driven structure persists for arbitrary $n$, 
and that a closed analytic expression for the minimal total variance 
may be derived directly from the multiplicities of irreducible spin sectors.

Beyond qubit systems, it is natural to extend the present representation-theoretic framework to higher-rank Lie algebras such as $\mathfrak{su}(d)$ 
and to collective actions on qudit ensembles. 
{
However, it appears that our current method can only provide bounds for some cases for higher spin $\mathfrak{su}(2)$-representations, thus new approach is needed for general higher spin representations} 
Moreover, establishing sharp, fully analytic state-independent variance bounds 
for general compact Lie algebras remains an open problem. 
Resolving this question would unify algebraic uncertainty principles 
with representation theory in a systematic way, 
providing intrinsic fluctuation limits for arbitrary multipartite quantum systems.

\bigskip
\centerline{\bf Acknowledgments}
The project is partially supported by the Simons Foundation.

\bigskip
\noindent{\bf Data availability statement} All data of the research have been included in the article. 

\bigskip
\bibliographystyle{amsalpha}

{
\vskip 1in
\section{Appendix}

We include the detailed derivation of the optimization used in Theorem
\ref{thm:variance-extremal-bound}.

\begin{lemma}\label{l:min} For positive parameters $j_k>0$ ($1\leq k\leq r$), let 
\begin{equation}
   f(x) = \sum_{k=1}^r j_k x_k - 2 \sum_{1 \le i < j \le r} j_i j_j x_i x_j , 
\end{equation}
be the real function of $x_1, \ldots, x_r$ defined on 
$[0, 1]^r$. Then the minimum of $f(x)$ under the constraint \( 
\sum_{k=1}^r x_k = 1 \) and $x_k>0$ is given by
\begin{equation}\label{e:mini}
  f_{\min} = \frac{r}{4r - 4} + \left( \frac{1}{2} + \frac{ \sum_{i=1}^r \frac{1}{j_i} }{4 - 4r} \right) \cdot \frac{ -2(r+1) + \sum_{i=1}^r \frac{1}{j_i} }{ \left( \sum_{i=1}^r \frac{1}{j_i} \right)^2 - (r - 1) \sum_{i=1}^r \frac{1}{j_i^2} }.  
\end{equation}
\end{lemma}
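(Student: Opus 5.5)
The plan is to minimise $f$ on the relative interior $\{x_k>0,\ \sum_k x_k=1\}$ with a single Lagrange multiplier. We then solve the stationarity system in closed form and substitute back to get \eqref{e:mini}.

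\textbf{Change of variables.} Set $y_k:=j_kx_k$ and $S:=\sum_k y_k$. Since $2\sum_{i<j}y_iy_j=S^2-\sum_k y_k^2$, the objective becomes
\begin{equation*}
f=S-S^2+\sum_{k=1}^r y_k^2 ,
\end{equation*}
and the constraint reads $\sum_k y_k/j_k=1$. Write $a:=\sum_i 1/j_i$ and $b:=\sum_i 1/j_i^2$.

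\textbf{Stationarity.} Differentiating $f-\lambda(\sum_k x_k-1)$ in $x_k$ gives $j_k(1-2S+2y_k)=\lambda$. Hence
\begin{equation*}
y_k=\tfrac12\bigl(\lambda/j_k-1+2S\bigr).
\end{equation*}
Summing over $k$ gives the first linear relation, $2(1-r)S=\lambda a-r$. Dividing by $j_k$ and summing, using the constraint, gives the second, $\lambda b=2+a-2aS$. Solving these two linear equations for $(S,\lambda)$ yields expressions whose common denominator is $a^2-(r-1)b$, which is exactly the denominator appearing in \eqref{e:mini}. This also fixes the explicit stationary point $y_k$, hence $x_k=y_k/j_k$. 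The stated result implicitly requires $a^2\neq(r-1)b$ and $x_k>0$ at this point, which is the hypothesis ``$x_k>0$'' in the statement.

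\textbf{Evaluating $f$.} To avoid summing $\sum_k y_k^2$ directly, I would use that $f$ is quadratic. At a constrained critical point, Euler's identity for the linear and quadratic parts gives
\begin{equation*}
f=\tfrac12\Bigl(\sum_k j_kx_k+\lambda\Bigr)=\tfrac12(S+\lambda).
\end{equation*}
So it suffices to insert the solved $S$ and $\lambda$. I would organise the result as $\tfrac{r}{4r-4}$ plus a multiple of $\bigl(\tfrac12+\tfrac{a}{4-4r}\bigr)$, since $S=\tfrac{r-\lambda a}{2(r-1)}$ naturally produces these two pieces. The remaining factor should then collapse to a numerator linear in $a$ over $a^2-(r-1)b$, which I expect to match the form $(-2(r+1)+a)/(a^2-(r-1)b)$ in \eqref{e:mini}.

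\textbf{Minimality.} On the tangent space $\sum_k\xi_k=0$, the Hessian of $f$ is $-2\bigl((j\cdot\xi)^2-\sum_k j_k^2\xi_k^2\bigr)$. Positivity of this form on that hyperplane (the condition under which the critical point is a minimum) reduces, by a Schur-complement / bordered-Hessian test, to a sign condition on $a^2-(r-1)b$. I would state this as the regime in which the lemma applies.

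\textbf{Main obstacle.} The hard step is the final algebraic collapse to exactly the displayed expression, in particular tracking the constant $-2(r+1)$ in the numerator. I would check it against the case $r=2$, $(j_1,j_2)=(\tfrac32,\tfrac12)$, comparing with the direct one-variable minimisation of the Example after Corollary~\ref{cor:boundary-global-min}. If the constant does not match there, that signals an error either in this bookkeeping or in the displayed formula, and it would have to be resolved before the proof can be completed.
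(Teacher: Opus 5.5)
Your route is the paper's own. It uses the same Lagrange condition and the same two summed relations (the paper divides the stationarity equation by $j_i$ and by $j_i^2$, which is your summing of $y_k$ and of $y_k/j_k$). It also uses the same shortcut $f=\tfrac12(S+\lambda)$, which is \eqref{eq:fmin-symmetric}. The gap is the step you left as an expectation, and it does not come out as you expect. Substituting $S=\frac{r-\lambda a}{2(r-1)}$ into $\lambda b=2+a-2aS$ gives $\lambda\bigl(b-\frac{a^2}{r-1}\bigr)=2-\frac{a}{r-1}$, so
\[
\lambda=\frac{a-2(r-1)}{a^2-(r-1)b},\qquad f=\frac{r}{4r-4}+\Bigl(\frac12+\frac{a}{4-4r}\Bigr)\frac{a-2(r-1)}{a^2-(r-1)b}.
\]
The constant is $-2(r-1)$, not $-2(r+1)$, so no bookkeeping will produce the displayed formula: it is false as printed. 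Your own test case confirms this. For $j=(\tfrac32,\tfrac12)$ one has $a=\tfrac83$, $b=\tfrac{40}{9}$ and $a^2-b=\tfrac83$, hence $\lambda=\tfrac14$ and $f=\tfrac12-\tfrac16\cdot\tfrac14=\tfrac{11}{24}$. This matches the direct minimum of $\tfrac12-\tfrac x2+\tfrac32x^2$ at $x=\tfrac16$, whereas $-2(r+1)$ gives $\tfrac{17}{24}$. The $(r+1)$ is a typo in the paper:
\begin{itemize}
\item its closed form \eqref{eq:lambda-closedform} has numerator $-2+\frac{a}{r-1}$;
\item the version \eqref{eq:f-min} in Theorem~\ref{thm:variance-extremal-bound} has $-2(r-1)$;
\item the values $\tfrac{11}{24}$, $\tfrac4{11}$, $\tfrac18$ obtained in the paper only come out with $-2(r-1)$.
\end{itemize}
Once finished, your argument proves the corrected formula, and that is what you should state.

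On minimality you go beyond the paper, which simply declares the critical value to be the minimum, and your criterion is right. Put $\eta_k=j_k\xi_k$. The second-order part of $f$ is $\sum_k\eta_k^2-(\sum_k\eta_k)^2$ on the hyperplane $\sum_k\eta_k/j_k=0$. The maximum of $(\sum_k\eta_k)^2/\|\eta\|^2$ there is $r-a^2/b$, the squared length of the projection of $(1,\dots,1)$. Hence the form is positive definite iff $a^2>(r-1)b$. In that case $f$ is strictly convex on $\{\sum_k x_k=1\}$, and the unique critical point is the minimum over the open simplex provided its coordinates are positive.

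Both conditions must be added as explicit hypotheses rather than read into ``$x_k>0$''. For $j=(1,10,10)$ one gets $a^2-2b<0$, the critical point has $x_2=x_3<0$, and the formula returns $\tfrac{241}{120}$. Yet $f(0,\tfrac12,\tfrac12)=-40$, so the infimum over the open simplex is at most $-40$ and is not attained. In the paper's applications both conditions do hold, with minimizers $(\tfrac16,\tfrac56)$, $(\tfrac1{11},\tfrac5{11},\tfrac5{11})$ and $(\tfrac14,\tfrac14,\tfrac14,\tfrac14)$.
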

\begin{proof}
We introduce the Lagrangian:
\begin{equation}
  F(x, \lambda) = \sum_{i=1}^r j_i x_i + \sum_{i=1}^r j_i^2 x_i^2 - \left( \sum_{i=1}^r j_i x_i \right)^2 - \lambda \left( \sum_{i=1}^r x_i - 1 \right).  
\end{equation}
Taking partial derivatives with respect to \( x_i \), the stationarity conditions are:
\begin{equation}\label{eq:lagrange-critical}
\frac{\partial F}{\partial x_i} = j_i + 2 j_i^2 x_i - 2 j_i \sum_{k=1}^r j_k x_k - \lambda = 0.
\end{equation}
Multiplying \eqref{eq:lagrange-critical} by \( x_i \) and summing over \( i \), and using the normalization constraint \( \sum x_i = 1 \), we obtain:
\begin{equation}\label{eq:lambda-expression}
\lambda = \sum_{i=1}^r j_i x_i + 2 \sum_{i=1}^r j_i^2 x_i^2 - 2 \left( \sum_{i=1}^r j_i x_i \right)^2.
\end{equation}
Since the critical point corresponds to the solution of \eqref{eq:lagrange-critical}, the minimum function of $f(x)$ is given by:
\begin{equation}\label{eq:fmin-symmetric}
f_{\min} = \frac{1}{2} \lambda + \frac{1}{2} \sum_{i=1}^r j_i x_i.
\end{equation}
To eliminate $\lambda$ we divide \eqref{eq:lagrange-critical} by \( j_i \) and sum over \( i \), yielding:
\begin{equation}\label{eq:dotx-lambda}
\sum_{i=1}^r j_i x_i = \frac{ \lambda \sum_{i=1}^r \frac{1}{j_i} - r }{2(1 - r)}.
\end{equation}
Next,  we divide \eqref{eq:lagrange-critical} by \( j_i^2 \) and sum over \( i \) to obtain:
\begin{equation}\label{eq:togetlambda}
    \sum_{i=1}^r \frac{1}{j_i} + 2 - 2\sum_{i=1}^r j_i x_i  \sum_{i=1}^r \frac{1}{j_i} - \lambda \sum_{i=1}^r \frac{1}{j_i^2} = 0.
\end{equation}
Substituting \eqref{eq:dotx-lambda} into \eqref{eq:togetlambda}, we obtain an explicit expression for \( \lambda \) in terms of the \( j_i \) only:
\begin{equation}\label{eq:lambda-closedform}
\lambda = \frac{-2 + \frac{1}{r - 1} \sum_{i=1}^r \frac{1}{j_i}}{\frac{1}{r - 1} \left( \left( \sum_{i=1}^r \frac{1}{j_i} \right)^2 - (r - 1) \sum_{i=1}^r \frac{1}{j_i^2} \right)}.
\end{equation}

Finally, substituting equations \eqref{eq:dotx-lambda} and \eqref{eq:lambda-closedform} into \eqref{eq:fmin-symmetric}, we finally obtain a closed-form expression for the minimal value of the objective function given in \eqref{e:mini}.
\end{proof}
}

\bigskip
\bigskip
\end{document}